\definecolor{shadecolor}{rgb}{1, 0.8, 0.3}
\theoremstyle{plain}
\newtheorem{theorem}{Theorem}[section]
\newtheorem{lemma}[theorem]{Lemma}
\theoremstyle{definition}
\theoremstyle{remark}
\newtheorem{remark}{Remark}
\newcommand{\la}{\lambda}
\newcommand{\beq}{\begin{equation}}
\newcommand{\eeq}{\end{equation}}
\newcommand{\bea}{\begin{eqnarray}}
\newcommand{\eea}{\end{eqnarray}}
\newcommand{\bean}{\begin{eqnarray*}}
\newcommand{\eean}{\end{eqnarray*}}
\newcommand{\bit}{\begin{itemize}}
\newcommand{\eit}{\end{itemize}}
\newcommand{\ben}{\begin{enumerate}}
\newcommand{\een}{\end{enumerate}}
\newcommand{\blem}{\begin{lem}}
\newcommand{\elem}{\end{lem}}
\newcommand{\bthm}{\begin{thm}}
\newcommand{\ethm}{\end{thm}}
\newcommand{\bpf}{\begin{IEEEproof}}
\newcommand{\epf}{\end{IEEEproof}}
\newcommand{\comment}[1]{}
\newcommand\defeq{\mathrel{\overset{\makebox[0pt]{\mbox{\normalfont\tiny\sffamily def}}}{=}}}
\newcommand*\zoset{\{0,1\}}
\newcommand*\bigOh[1]{$\mathcal{O}(#1)$}
\newcommand*\ol[1]{\overline{#1}} 
\newcommand*\mbf[1]{\mathbf{#1}} 
\newcommand*\bx{\mathbf{x}} 
\newcommand*\ba{\mathbf{a}} 
\newcommand*\by{\mathbf{y}} 
\newcommand*\bz{\mathbf{z}} 
\newcommand*\bb{\mathbf{b}} 
\newcommand*\bu{\mathbf{u}}
\newcommand*\bt{\mathbf{t}}
\newcommand*\bw{\mathbf{w}}
\newcommand*\be{\mathbf{e}}
\newcommand*\set[1]{\text{set}\left({#1}\right)}
\newcommand*\supp[1]{\text{supp}\left({#1}\right)}
\newcommand*\thth[1]{$#1^{\text{\tiny th}}$}
\newcommand*\nod{K} 
\newcommand*\noi{n} 
\newcommand*\nom{m} 
\newcommand*\nob{M} 
\newcommand*\K{\nod} 
\newcommand*\n{\noi} 
\newcommand*\m{\nom} 
\newcommand*\M{\nob} 
\newcommand*\h{h} 
\newcommand*\eps{\epsilon} 
\begin{document}

\title{\textbf{SAFFRON}: A Fast, Efficient, and Robust Framework \\
for Group Testing based on Sparse-Graph Codes}

\author{Kangwook Lee, Ramtin Pedarsani, and Kannan Ramchandran\\ Dept. of Electrical Engineering and Computer Sciences\\ University of California, Berkeley\\ \{kw1jjang, ramtin, kannanr\}@eecs.berkeley.edu}
\date{}
\maketitle

\abstract{
Group testing tackles the problem of identifying a population of $\K$ defective items from a set of $\n$ items by \emph{pooling} groups of items efficiently in order to cut down the number of tests needed. The result of a test for a group of items is \emph{positive} if any of the items in the group is defective and \emph{negative} otherwise. The goal is to judiciously group subsets of items such that defective items can be reliably recovered using the minimum number of tests, while also having a low-complexity decoding procedure.

We describe SAFFRON (\textbf{S}parse-gr\textbf{A}ph codes \textbf{F}ramework \textbf{F}or g\textbf{RO}up testi\textbf{N}g), a \emph{non-adaptive} group testing paradigm that recovers at least a $(1-\eps)$-fraction (for any arbitrarily small $\eps > 0$) of $\K$ defective items with high probability with $m=6C(\eps)K\log_2{\n}$ tests, where $C(\eps)$ is a precisely characterized constant that depends only on $\eps$. 
For instance, it can provably recover at least $(1-10^{-6})\K$ defective items with $m \simeq 68 \K \log_2{\n}$ tests. 
The computational complexity of the decoding algorithm of SAFFRON is \bigOh{\K\log\n}, which is order-optimal. 
Further, we describe a systematic methodology to robustify SAFFRON such that it can reliably recover the set of $\K$ defective items even in the presence of erroneous or \emph{noisy} test results. 
We also propose Singleton-Only-SAFFRON, a variant of SAFFRON, that recovers \emph{all} the $\K$ defective items with $m=2e(1+\alpha)\K\log\K \log_2\n$ tests with probability $1-\mathcal{O}{\left(\frac{1}{\K^\alpha}\right)}$, where $\alpha>0$ is a constant. 
By leveraging powerful design and analysis tools from modern sparse-graph coding theory, SAFFRON is the first approach to reliable, large-scale probabilistic group testing that offers both precisely characterizable  number of tests needed (down to the constants) together  with order-optimal decoding complexity.

Extensive simulation results are provided to validate the tight agreement between theory and practice.
As a concrete example, 
we simulate a case where $\K=128$ defective items have to be recovered from a population of $\n \simeq 4.3\times 10^9$ items even when up to $2\%$ of the group test results are reported wrongly (i.e., negative tests are reported as positive and vice versa). 
We run the robustified SAFFRON $1000$ times. 
We observe that \emph{all} $\K=128$ defective items are successfully recovered in every run with $\m \simeq 8.3\times 10^{5}$ tests, and the decoding time takes only about $3.8$ seconds on average on a laptop with a 2 GHz Intel Core i7 and 8 GB memory. 
}

\section{Introduction}
Group testing tackles the problem of identifying a population of $\K$ defective items from a set of $\n$ items by \emph{pooling} groups of items efficiently in order to cut down the number of tests needed. The result of a test for a group of items is \emph{positive} if any of the items in the group is defective and \emph{negative} otherwise. The goal is to judiciously group subsets of items such that defective items can be reliably recovered using the minimum number of tests, while also having a low-complexity decoding procedure.

Group testing has been studied extensively in the literature. 
Group testing arose during the Second World War \cite{hwang2000combinatorial}: 
in order to detect all soldiers infected with the syphilis virus without needing to test them individually, which was too expensive and slow, the blood samples of subsets of soldiers were pooled together and tested as groups, so that groups that tested negative would immediately exonerate all the individuals in the group.
Since then, varied theoretical aspects of group testing have been studied, and more applications of group testing have been discovered in a variety of fields spanning across biology \cite{ngo_survey}, machine learning \cite{malioutov2013exact}, medicine \cite{ganesan2015learning}, computer science \cite{goodrich_forensic}, data analysis \cite{gilbert_sparse_signal_recovery}, and signal processing \cite{olgica_compressive_sensing}. 

\subsection{Our Contributions}
The problem of group testing has been a very active area of research, and many variants have been studied in the literature.
Despite the long history of group testing, our paper has some novel intellectual and practical contributions to the field.  Our paper falls specifically into the well-studied category of large-scale probabilistic group testing where both the ambient test population size and the number of defective items are scalable, and where a targeted arbitrarily-tiny fraction of defective items can be missed.

In this work, we introduce SAFFRON (Sparse-grAph codes Framework For gROup testiNg), a powerful framework for non-adaptive group testing based on modern sparse-graph coding theory \cite{RUbook}. 
Our main intellectual contribution is that we are, to the best of our knowledge, the first to leverage the tools of sparse-graph coding theory for both group testing code design, and performance analysis, based on powerful density evolution techniques.
Recall that sparse-graph codes (e.g. Low-Density-Parity-Check (LDPC) codes \cite{RUbook}) form the backbone of reliable modern communication systems (e.g. telecommunications, wireless cellular systems, satellite and deep-space communications, etc.).  
However, pooling test design for our targeted group testing problem seems quite different from code design for the noisy communication problem, and it is not clear if the tools of the latter are even applicable here.
Concretely, 
classical coding theory deals with the design of codes based on finite-field arithmetic: e.g., in the binary-field case, this corresponds to the modulo-$2$ or XOR world.
In contrast, group testing deals with the Boolean OR world, where each observed test output is the binary OR of the states of each of the component input items in the test.  
The non-linearity of the OR operator is at odds with finite-field arithmetic, and complicates the use of classical coding theory in the group testing problem.  Our main intellectual contribution is to show how this challenge can be overcome, where we show how elegant density evolution methods and simple randomized sparse-graph coding designs can lead to powerful group test codes as well.    

A second contribution, which follows from the use of these powerful coding theory tools, is that we are able to specify {\em precise constants} in the number of tests needed while simultaneously having provable performance guarantees and order-optimal decoding complexity in the large-scale probabilistic group testing setting. 
To the best of our knowledge, this is new.

We summarize our main contributions as follows.
\begin{enumerate}[(i)]
\item The SAFFRON scheme recovers, with high probability, an arbitrarily-close-to-one fraction, $1-\eps$, of the defective items with $m=C(\eps) \K \log_2 \n$ tests, where $C(\eps)$ is a constant that depends only on $\eps$ and can be precisely computed (See Table \ref{table:c_eps}).
Moreover, the computational complexity of our decoding algorithm is $\mathcal{O}(\K \log \n)$, which is order-optimal. For instance, SAFFRON reliably recovers at least $(1-10^{-6})\K$ defective items with $m \simeq 68 \K \log_2{\n}$ tests. 
\item We propose a variant of the SAFFRON scheme, Singleton-Only-SAFFRON, which recovers \emph{all} the defective items with high probability, $1-\mathcal{O}{\left(\frac{1}{\K^\alpha}\right)}$, at the cost of $m=2e(1+\alpha)\K \log{\K} \log_2{\n}$ tests. The computational complexity of the decoding algorithm is $\mathcal{O}(\K \log \K \log \n)$. 
\item SAFFRON and its variant can be systematically robustified to noise and errors by increasing the number of tests by a constant factor that \emph{does not change the order-complexity of the scheme.} 
\end{enumerate}

Extensive simulation results are provided to validate the tight agreement between theory and practice.
As a concrete example, 
we simulate a case where $\K=128$ defective items have to be recovered from a population of $\n \simeq 4.3\times 10^9$ items with $2\%$ errors in test results.
We run the robustified SAFFRON $1000$ times. 
We observe that \emph{all} $\K=128$ defective items are successfully recovered in every run with $\m \simeq 8.3\times 10^{5}$ tests, and the decoding time takes only about $3.8$ seconds on average on a laptop with a 2 GHz Intel Core i7 and 8 GB memory. 

\subsection{Related Works} \label{sec:related_works}
We provide a brief survey of the existing results in the literature. We refer the readers to \cite{ngo2000survey, ngo_survey, mazumdar2015nonadaptive} for a detailed survey. 

We first summarize the known results on the minimum number of tests required to solve a (non-adaptive) group testing problem. 
For group testing algorithms with zero-error reconstruction, the best known lower bound on the number of required tests is $\Omega(\frac{K^2}{\log \K}{\log \n})$ \cite{d1982bounds, d1989bounds}. The best known group testing scheme under this setup requires \bigOh{\K^2 \log \n} tests \cite{hwang2000combinatorial}.
Although the zero-error reconstruction property is definitely a desired property,
such group testing schemes typically involve exhaustive table searches in their reconstruction procedures, and hence require a high computational and memory complexity of $\mathcal{O}(\K^2 \n \log \n)$ \cite{ngo_random}. 
The notable exception is a recent work \cite{indyk}, which is the first scheme that requires $\mathcal{O}(\K^2 \log \n)$ tests, while having an efficient decoding algorithm of computational complexity  $\text{poly}(\K)\cdot\mathcal{O}(\K^2 \log \n\log^2{(\K^2 \log \n)}) + \mathcal{O}(\K^4 \log^2 \n)$.

Several relaxations of the group testing problem have been studied in the literature. 
One such relaxation is allowing a small error probability as well as relaxing the requirement of perfect identification.
That is, the goal is to design a group testing scheme that identifies an approximate answer with high probability. 
Many approaches have been proposed to design group testing schemes that allow an efficient decoding algorithm for these relaxed yet important problems.
One such approach is random pooling design based on random bipartite graphs. 
For instance, randomized group testing schemes based on left-regular random bipartite graphs and right-regular random bipartite graphs are studied, respectively in \cite{bkb95} and \cite{hwa99}. 
Other lines of work have made use of existing pooling designs. 
In \cite{macula, ngo_random}, the authors use randomly chosen pools from carefully designed pools that are initially tailored for a zero-error reconstruction setting. 
With certain success probabilities, these schemes find a large fraction of the $\K$ defective items, while wrongly identifying a small fraction of normal items as defective items.
Despite the simplicity of this class of constructions, performance analysis is rather convoluted and cumbersome.
Further, these schemes are generally difficult to make robust to noise.

Another line of work is based on an information-theoretic formulation of the group testing problem.
That is, one assumes a prior distribution on the set of defective items, and searches for a group testing scheme with vanishing error probability.
In \cite{atia2012boolean}, the authors present information-theoretic bounds: with a uniform prior over the location of $\K$ defective items, $\Theta(\K \log(\frac{\n}{\K}))$ tests are necessary and sufficient (via random coding).
Mazumdar presents near-optimal explicit constructions in \cite{mazumdar2015nonadaptive}.
Chan et al. propose novel group testing algorithms and compare their performances with the fundamental lower bounds \cite{jaggi_bounds}. 
Several works have proposed group testing schemes with efficient decoding algorithms.
In \cite{gilbert_sparse_signal_recovery}, the authors present a simple group testing procedure that efficiently recovers a large fraction of $\K$ defective items with \bigOh{\K \log^2 \n} tests. 
While the proposed decoding algorithm runs in time $(\K \log\n)^{\mathcal{O}(1)}$, the algorithm returns \bigOh{\K \log n} false positives, which need to be double-checked using a $2$-stage algorithm. One of the most notable exceptions is \cite{jaggi_grotesque}. In this work, Cai et al. propose GROTESQUE, a class of efficient group testing schemes, which is the first adaptive group testing algorithm that achieves both an order-optimal number of tests and an order-optimal decoding complexity, but at the cost of using \bigOh{\log\K} adaptive stages.
Their non-adaptive scheme requires \bigOh{\K \log \K \log \n} tests, and has a decoding complexity is \bigOh{\K(\log n + \log^2 \K)}.

\subsection{Paper Organization} \label{sec:paper_organization}
The rest of the paper is organized as follows.
In Section \ref{sec:group_test}, we formally define the problem, and provide the basic ideas, based on which we develop SAFFRON.
In Section \ref{sec:SAFFRON}, we provide a detailed description of SAFFRON and its decoding algorithm. 
In Section \ref{sec:noiseless}, we provide the main theoretical results of the paper. 
In Section \ref{sec:noisy}, we robustify SAFFRON so that it can reliably recover the defective items even with some erroneous test results. 
Finally, we provide extensive simulation results in Section \ref{sec:sim}, verifying our theoretical guarantees as well as demonstrating the practical performance of SAFFRON.

\section{Group Testing Problem and Overview of the Main Results} \label{sec:group_test}

We formally define the group testing problem as follows. 
Consider a group testing problem with $\n$ items.
Among them, exactly $\K$ items are defective.
We define the support vector $\bx \in \zoset^\n$, of which the \thth{i} component is $1$ if and only if item $i$ is defective. 
That is, $x_i = \mbf{1}\{\text{item}~i~\text{is defective}\}$ for $1 \leq i \leq \n$. 
Defining $\supp{\cdot}$ as the set of indices of non-zero elements, $|\supp{\bx}| = \K$.

A subset of items can be pooled and tested, and the test result is either $1$ (positive) if any of the items in the subset is defective, or $0$ (negative) otherwise. 
For notational simplicity, we denote a subset by a binary row vector, $\ba$, of which the \thth{i} component is $1$ if and only if item $i$ belongs to the subset. 
Then, a group testing result $y$ can be expressed as 
\begin{align}
y = \langle \ba,\bx \rangle \defeq \bigvee\limits_{i=1}^n a_i x_i \label{eq:y_ax},
\end{align}
where $\vee$ is a boolean OR operator. 
Let $\m$ be the number of pools.
Denoting the \thth{i} pool by $\ba_i$, we define \emph{the group testing matrix} as $A \defeq (\ba_1^T, \ba_2^T, \ldots, \ba_\m^T)^T \in \{0,1\}^{\m\times \n}$. 
The group testing results from $\m$ pools can also be represented as a column vector $\by = (y_1, y_2, \ldots, y_\m)^T \in \{0,1\}^{m}$, where $y_i$ is the outcome of the \thth{i} test. Then, 
\begin{align}
\mathbf{y} =  A \odot \bx \defeq 
\left( \begin{array}{c}
\langle \ba_1, \bx \rangle \\
\langle \ba_2, \bx \rangle \\
\vdots \\
\langle \ba_m, \bx \rangle
\end{array}
\right).
\end{align}

The goal is to design the group testing matrix $A$ and efficiently recover the set of defective items using the $\m$ test results. 
We denote the decoding function by $g_A \colon \zoset^\m \to \zoset^\n$. 
We want the decoding result $\hat{\bx} = g_A(y)$ to be \emph{close} to the true support vector $\bx$ with some \emph{guarantee}.

Depending on different applications, one can define `closeness' and `guarantee' in different ways. Let us first discuss various notions of `closeness'. The most stringent objective is exact recovery: $\hat{\bx} = \bx$. Another objective, slightly looser, is partial recovery without false detections: one wants to make sure that $\set{\hat{\bx}} \subseteq \set{\bx}$ and $|\set{\hat{\bx}}| \geq (1-\eps) \K$. 
Another criterion is to find a superset of all the defective items without missed detections, i.e., $\set{\hat{\bx}} \supseteq \set{\bx}$ and $|\set{\hat{\bx}}| \leq (1+\eps) \K$. 

There are also several types of guarantees. The most stringent guarantee is perfect guarantee: for any $\bx$, the corresponding estimate has to be close with probability $1$. 
Another popular guarantee is probabilistic guarantee. 
That is, the decoder will provide an exact or close estimate with high probability. 

Our proposed approach, SAFFRON, recovers an arbitrarily-close-to-one fraction of the defective items (partial recovery) with high probability (probabilistic guarantee). On the other hand, we propose the variant of the SAFFRON scheme that recovers \emph{all} the defective items (perfect recovery) with high probability. 

An important generalization of the problem is to find the set of defective items with erroneous or \emph{noisy} test results. 
We show that SAFFRON can be robustified to noise, while maintaining its plain architecture.

The main results of this paper are stated in the following (informal) theorem. 

\begin{theorem}
Consider a group testing problem with $\n$ items and $\K$ defective items. SAFFRON recovers a $(1-\eps)$-fraction of the defective items for arbitrarily-close-to-zero constant $\eps$ with high probability. The number of tests and the computational complexity of the decoding algorithm are $\mathcal{O}(\K \log(\n))$, which is order-optimal in both noiseless and noisy settings. 
\end{theorem}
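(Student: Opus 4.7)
The plan is to prove the theorem by constructing a randomized bipartite-graph test design, showing that a simple peeling decoder succeeds on most defective items via a density-evolution argument, and then robustifying the per-bundle ``signature'' against test noise.

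First I would set up the code structure as a two-level design. At the inner level I would group tests into \emph{bundles}: each bundle consists of a small number (on the order of $\log n$) of tests applied to the same subset of items, where the items in the subset are labeled so that the bundle's test outputs form a signature that unambiguously identifies a single defective item whenever the bundle is a \emph{singleton} (i.e., contains exactly one defective). The standard way to do this is to use, for each item in the bundle, a pair of complementary binary labels of length $\log_2 n$, so that a singleton bundle produces the binary expansion of the defective item's index together with its complement; this also allows singletons to be distinguished from non-singletons by checking the complementarity condition. This gives $2\log_2 n$ tests per bundle. At the outer level I would design a sparse random bipartite graph in which each defective item is placed uniformly at random into a constant number of bundles (or, dually, bundles are right-regular of constant degree), following the LDPC-style construction referenced in the paper.

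Second, I would specify the peeling decoder: in each iteration, identify every bundle that looks like a singleton using its signature, declare the implicated item defective, and then ``peel'' that item out of all other bundles it participates in by flipping the appropriate test outputs. Iterating, any bundle that becomes a singleton in a later round can similarly be resolved. Success up to an $\epsilon$-fraction of items reduces to showing that the peeling process terminates with at most $\epsilon K$ unresolved items with high probability.

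Third, the heart of the argument is the density-evolution analysis. Letting $\rho_j$ denote the probability that an unresolved item remains unresolved after $j$ iterations, I would derive a recursion of the form $\rho_{j+1} = f(\rho_j;\beta)$, where $\beta = K/B$ is the ratio of defective items to bundles and $f$ is determined by the graph's degree structure — exactly analogous to the standard LDPC peeling recursion, but adapted to the Boolean-OR channel where ``edge message'' corresponds to ``does this bundle contain another unresolved defective.'' One then picks degree parameters and the ratio $\beta = 1/C(\epsilon)$ so that the fixed point of this recursion lies below $\epsilon$, yielding the constant $C(\epsilon)$ that appears in the test count $m = 2\log_2 n \cdot B = 2C(\epsilon)K\log_2 n$ (with the factor $6$ in the paper coming from the specific degree distribution chosen). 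A standard Wormald-style or martingale concentration argument then transfers the expected-trajectory result to a high-probability statement for the actual random graph of size $\Theta(K)$. Complexity is immediate: there are $O(K)$ bundles and each is processed in $O(\log n)$ time per peeling step, with $O(1)$ peelings per bundle in expectation, giving $O(K\log n)$ overall.

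Fourth, for the noisy extension I would replace the bare signature with one that tolerates errors: either repeat each of the $2\log_2 n$ tests a constant number of times and majority-decode, or use a short error-correcting code as the inner code, with a robust singleton/multiton/zeroton classifier in place of the complementarity check. Since both modifications merely inflate the per-bundle test count by a constant factor, the overall test count and decoding complexity stay $O(K\log n)$. I expect the \emph{main obstacle} to be the density-evolution step: unlike linear sparse-graph codes where the message alphabet is GF(2) and peeling analysis is textbook, here the ``singleton/multiton'' classification interacts nontrivially with the Boolean OR operation and, in the noisy case, with the imperfect classifier, so the recursion $f$ and its fixed-point threshold must be rederived from scratch and the concentration argument redone to make sure the bounded-differences hypotheses hold despite the cascading effect of misclassified bundles.
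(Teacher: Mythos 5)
Your overall architecture matches the paper's: an outer sparse left-regular bipartite graph, an inner signature of complementary binary index labels for singleton identification, a peeling decoder analyzed by density evolution, and robustification by an inner error-correcting code. However, there is a genuine gap at the step you yourself flag as the obstacle, and the mechanism you propose for it would fail. You say that once an item is declared defective you ``peel that item out of all other bundles it participates in by flipping the appropriate test outputs.'' Under the Boolean OR observation model this is not possible: if a test reads $1$ and a known defective contributes a $1$ to it, you cannot determine whether any other item also contributes, so OR-saturated bits cannot be un-flipped. The paper's actual peeling mechanism is different: a bundle containing exactly one \emph{known} defective $\ell_0$ and one unknown defective $\ell_1$ (a ``resolvable doubleton'') is decoded by exploiting the pair $(\bb_{\ell_0},\overline{\bb_{\ell_0}})$ --- for every bit position, $\ell_0$ contributes a $0$ to exactly one of the two complementary sections, and the unknown item's bit is read from that section. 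Crucially this trick does \emph{not} extend to bundles with three or more defectives of which all but one are known, so the density-evolution recursion only credits degree-$1$ and degree-$2$ right nodes, $p_{j+1}=\bigl[1-\bigl(\rho_1+\rho_2(1-p_j)\bigr)\bigr]^{d-1}$ with Poisson $\rho_1=e^{-\la}$, $\rho_2=\la e^{-\la}$; your generic ``any bundle that becomes a singleton after peeling'' recursion would overstate what the decoder can resolve and yield the wrong fixed point and the wrong constant $C(\eps)$.

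A second, related omission: with only the two complementary sections, the weight test $w(\bz)=L$ certifies singletons, but there is no way to \emph{verify} a hypothesized resolvable doubleton, so a bundle with three or more defectives processed under the doubleton hypothesis would produce undetected false positives. The paper adds four more sections built from two independent uniformly random index sequences $\mathbf{s}_1,\mathbf{s}_2$, decodes three candidate indices $\ell_1,\ell_2,\ell_3$, and accepts only if $i_{\ell_1}=\ell_2$ and $j_{\ell_1}=\ell_3$, which bounds the false-declaration probability by $1/\n^2$ per attempt and $\mathcal{O}(\K/\n^2)$ overall. This is also why the test count is $6C(\eps)\K\log_2\n$: the factor $6$ is the number of signature sections of length $\log_2\n$, not an artifact of the degree distribution as you suggest. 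Finally, note that your two-section, singleton-only fallback is exactly the paper's Singleton-Only-SAFFRON, which costs $\Theta(\K\log\K\log\n)$ tests by a coupon-collector argument and therefore does not achieve the $\mathcal{O}(\K\log\n)$ claim of this theorem; the doubleton mechanism is what makes $M=\Theta(\K)$ bundles suffice.
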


\begin{remark} 
In some applications, it is possible to design pools in an adaptive way. In other words, the \thth{(i+1)} pool, $\ba_{i+1}$, can be `adaptively' designed after observing the first $i$ test results or $(y_1,y_2,\ldots,y_i)$. Such a scheme is called an adaptive group testing scheme. 
While an adaptive group testing scheme requires a lower number of tests and can potentially lead to a more efficient decoding algorithm, our focus here is only on \emph{non-adaptive} group testing for the following reasons. 
First, adaptive group testing is generally applicable in limited settings of interest, whereas non-adaptive group testing applies broadly to any group testing setting.
Moreover, non-adaptive group testing enjoys an important architecture advantage.  
In contrast to adaptive group testing algorithms which are necessarily sequential in nature, non-adaptive group testing feature pre-determined pools, and can therefore be easily parallelized, leading to more efficient implementation, especially in this era of large-scale parallel computing.
\end{remark}

Table \ref{table:notation} summarizes our notation, which will be defined throughout the paper. 
\begin{table}[h]
\small
\centering
\begin{tabular}{|c|l|}
\hline
Notation      & \multicolumn{1}{c|}{Definition}                             \\ \hline
$\n$          & Number of items                                             \\ \hline
$\K$          & Number of defective items                                        \\ \hline
$\m$          & Number of pools (tests)                                            \\ \hline
$\M$          & Number of right nodes (bundles of tests)                            \\ \hline
$\bx$         & Binary representation of the set of defective items     \\ \hline
$\by$         & Binary representation of the group test results             \\ \hline
$\bz_i$       & Measurements from the \thth{i} right node                            \\ \hline
$\bz_i^j$     & Measurements from the \thth{j} stage of the \thth{i} right node        \\ \hline
$\bu_i$       & Signature vector of the \thth{i} item                         \\ \hline
$\bu_i^j$     & The \thth{j} stage of the signature vector of the \thth{i} item \\ \hline
$\bb_i$    & Binary representation of $i-1$ \\ \hline
$\be_i$       & The \thth{i} standard basis vector                             \\ \hline
$\supp{\cdot}$    & The set of the indices of the non-zero elements      \\ \hline
$w(\cdot)$    & Hamming weight or the number of ones of a vector    \\ \hline
$q$ & Probability of each test outcome being wrong       \\ \hline
$\mathcal{G}$ & Bipartite graph representation of a sparse-graph code       \\ \hline
$\overline{\mathbf{x}}$ & Bit-wise complement vector of $\mathbf{x}$ \\ \hline
$[n]$ & $\{1,2,\ldots,n\}$ \\ \hline
\end{tabular}
\caption{Summary of our notation}
\label{table:notation}
\end{table}

\section{The SAFFRON Scheme: Main Idea} \label{sec:SAFFRON}
Our test matrix design is based on an architectural philosophy that is similar to the ones in 
\cite{sameer1, sameer2, simon1, simon2, phasecode1, phasecode2, phasecode3}. 
The key idea of SAFFRON is the adoption of a design principle called `\emph{sparse signal recovery via sparse-graph codes}' that is applicable to a varied class of problems such as computing a sparse Fast Fourier Transform, a sparse Walsh Hadamard Transform, 
and the design of systems for compressive sensing and compressive phase-retrieval.
In all these problems, one designs an efficient way of sensing or \emph{measuring} an unknown sparse signal such that the decoder can estimate the unknown signal with a low decoding complexity. 
The overarching design principle is to 1) design a sensing matrix based on a sparse bipartite graph and to 2) decode the observed measurements using a simple peeling-like iterative algorithm.
We show how this same design principle allows us to efficiently tackle the group testing problem.

Consider a bipartite graph with $\n$ left nodes and $\M$ right nodes. 
Here, the $\n$ left nodes correspond to the $\n$ items, and the $\M$ right nodes corresponds to the $\M$ bundles of test results. 
We design a bipartite graph based on left-regular construction. 
That is, each left node is connected to constant number $d$ of right nodes uniformly at random.

We denote the incidence matrix of a bipartite graph $\mathcal{G}$ by $T_\mathcal{G} \in \zoset^{\M \times n}$, or simply $T$ if $\mathcal{G}$ is clear from the context. Let $\bt_i$ be the $i$th row of $T_\mathcal{G}$.
We associate each left node with a carefully designed \emph{signature} (column) vector $\bu$ of length $\h$, i.e., $\bu \in \zoset^\h$. Let us denote the signature vector of item $i$ by $\bu_i$. 
We define the signature matrix $U \defeq [u_1,u_2,\ldots,u_{n-1},u_\n ] \in \zoset^{\h \times n}$.

Given a graph $\mathcal{G}$ and a signature matrix $U$, we design our group testing matrix $A$ to be a \emph{row tensor product} of $T_{\mathcal{G}}$ and $U$, which is defined as $A = T_{\mathcal{G}} \otimes U \defeq [A_1^T, A_2^T, \ldots, A_\M^T]^T \in \zoset^{\h\M \times \n}$, where $A_i = U \text{diag}(\mathbf{t}_i)\in \zoset^{\h \times \n}$, and $\text{diag}(\cdot)$ is the diagonal matrix constructed by the input vector.
As an example, the row tensor product of matrices 
\begin{align}
T = \left [ \begin{array}{ccc}
0 & 1 & 0 \\
1 & 1 & 0 \\
0 & 0 & 1
\end{array} \right]
 ~ \text{and} ~  ~
U = \left [ \begin{array}{ccc}
1 & 0 & 1 \\
0 & 1 & 1 
\end{array} \right] 
\end{align}
is 
\begin{align}
A = T \otimes U = \left [ \begin{array}{ccc}
0 & \textbf{0} & 0 \\
0 & \textbf{1} & 0 \\ \hline
\textbf{1} & \textbf{0} & 0 \\ 
\textbf{0} & \textbf{1} & 0 \\ \hline
0 & 0 & \textbf{1} \\
0 & 0 & \textbf{1}
\end{array} \right].
\end{align}
For notational simplicity, we define the observation vector  corresponding to  right node $i$ as $\bz_i \defeq \by_{(i-1)\h+1:i\h}$ for $1 \leq i \leq \M$. Then, \begin{align}
\bz_i = U \odot \text{diag}(t_i) \bx, ~~ 1 \leq i \leq \M. \label{eq:bin_measurement}
\end{align}
In other words, $\bz_i$ is the bitwise logical ORing of all the signature vectors of the active left nodes that are connected to right node $i$.

Our decoding algorithm simply iterates through all the right node measurement vectors $\{\bz_i\}_{i=1}^{\M}$, and checks whether a right node is \emph{resolvable} or not. A right node is resolvable if exactly one new defective item can be detected by processing the right node, i.e., the location index of the defective item is found. 
The decoding algorithm is terminated when there is no more resolvable right node.

We now present the following terminologies. 
A right node that is connected to one and only one defective item is called a \emph{singleton}. A right node that is connected to two defective items is called a \emph{doubleton}.
Later, we show that with the aid of our signature matrix, 1) a singleton is resolvable, and 2) a doubleton is resolvable if one of the two defective items is already identified (in the previous iterations of the algorithm).

\subsection{Detecting and Resolving a Singleton} \label{sec:singleton_decode}
Consider the following signature matrix where the \thth{i} column is a vertical concatenation of $\bb_i$ and its complement, where $\bb_i$ is the $L$-bits binary representation of an integer $i-1$, for $i \in [\n]$, so $L = \lceil \log_2{\n} \rceil$. 
\footnote{For simplicity, the rest of paper will assume that $\n$ is a power of $2$, and hence $L = \log_2{\n}$.}
\begin{align}
\label{eq:A_0}
\left[\begin{array}{c}
U_1 \\
\overline{U}_1
\end{array}\right] &= 
\left[ \begin{array}{cccccc}
\bb_1 & \bb_2 & \bb_3 & \ldots & \bb_{\n-1} & \bb_{\n} \\ 
\rule{0pt}{3ex}    
\overline{\bb_1} & \overline{\bb_2} & \overline{\bb_3} & \ldots & \overline{\bb_{\n-1}} & \overline{\bb_{\n}}
\end{array}
\right]
= 
\left[ \begin{array}{cccccc}
0 & 0 & 0 & \ldots & 1 & 1 \\
0 & 0 & 0 & \ldots & 1 & 1 \\
\vdots & \vdots & \vdots & \ddots & \vdots & \vdots \\
0 & 0 & 1 & \ldots & 1 & 1 \\
0 & 1 & 0 & \ldots & 0 & 1 \\ \hline
1 & 1 & 1 & \ldots & 0 & 0 \\
1 & 1 & 1 & \ldots & 0 & 0 \\
\vdots & \vdots & \vdots & \ddots & \vdots & \vdots \\
1 & 1 & 0 & \ldots & 0 & 0 \\
1 & 0 & 1 & \ldots & 1 & 0
\end{array}
\right]
\end{align}
We now show that a singleton can be detected and resolved with the aid of this signature matrix.
First, note that the sum of the weight of any binary vector and the weight of its complement is always the length of the vector, $L$. 
Thus, given a singleton, the weight of the measurement vector is $L$. \emph{Furthermore, if the right node is connected to zero or more than one defective items, the weight of the measurement vector will not be $L$.} 
Therefore, by just checking the weight of the right-node measurement vector, one can simply detect whether the right node is a singleton or not. 
Further, one can also read the first half of the measurement of the detected singleton to find the index location of the defective item.

While having only $\left[\begin{array}{c} U_1 \\\overline{U}_1 \end{array}\right]$ as the signature matrix suffices for detecting and resolving singletons, in the following subsections, we show that to detect and resolve a doubleton, we need to expand the signature matrix. Thus, $\left[\begin{array}{c} U_1 \\\overline{U}_1 \end{array}\right]$ will be one part of our final signature matrix $U$.

\subsection{Resolvable Doubletons}
\label{sec:doubleton_decode}
We now design the full signature matrix $U$ by expanding $\left[\begin{array}{c} U_1 \\\overline{U}_1 \end{array}\right]$ so that one can detect and resolve both singletons and resolvable doubletons as follows. 
\begin{align}
\label{eq:A_1}
U &= 
\left[ \begin{array}{c}
U_1 \\
\overline{U}_1 \\
U_2 \\
\overline{U}_2 \\
U_3 \\
\overline{U}_3
\end{array}
\right]
=
\left[ \begin{array}{cccccc}
\bb_1 & \bb_2 & \bb_3 & \ldots & \bb_{\n-2} & \bb_{\n-1} \\ 
\rule{0pt}{3ex}    
\overline{\bb_1} & \overline{\bb_2} & \overline{\bb_3} & \ldots & \overline{\bb_{\n-1}} & \overline{\bb_{\n}} \\ \rule{0pt}{3ex}
\bb_{i_1} & \bb_{i_2} & \bb_{i_3} & \ldots & \bb_{i_{\n-1}} & \bb_{i_{\n}} \\ 
\rule{0pt}{3ex}    
\overline{\bb_{i_1}} & \overline{\bb_{i_2}} & \overline{\bb_{i_3}} & \ldots & \overline{\bb_{i_{\n-1}}} & \overline{\bb_{i_{\n}}} \\ \rule{0pt}{3ex}
\bb_{j_1} & \bb_{j_2} & \bb_{j_3} & \ldots & \bb_{j_{\n-1}} & \bb_{j_{\n}} \\ 
\rule{0pt}{3ex}    
\overline{\bb_{j_1}} & \overline{\bb_{j_2}} & \overline{\bb_{j_3}} & \ldots & \overline{\bb_{j_{\n-1}}} & \overline{\bb_{j_{\n}}}
\end{array}
\right],
\end{align}
where $\mathbf{s}_1= (i_1,i_2,\ldots,i_n)$ and $\mathbf{s}_2= (j_1,j_2,\ldots,j_n)$ are drawn uniformly at random from the set $[\n]^\n$. Then, the measurement vector for the \thth{k} right node, $\bz_k$, is 
\begin{align}
\bz_k =  U \odot \text{diag}(\bt_k) \bx  = \left[ 
\begin{array}{c}
 U_1 \odot \text{diag}(\bt_k) \bx  \\
 \overline{U}_1 \odot \text{diag}(\bt_k) \bx  \\
 U_2 \odot \text{diag}(\bt_k) \bx  \\
 \overline{U}_2 \odot \text{diag}(\bt_k) \bx  \\
 U_3 \odot \text{diag}(\bt_k) \bx  \\
 \overline{U}_3 \odot \text{diag}(\bt_k) \bx  
\end{array}
\right]
\defeq \left[ 
\begin{array}{c}
\bz^1_k\\
\bz^2_k\\
\bz^3_k\\
\bz^4_k\\
\bz^5_k\\
\bz^6_k
\end{array}
\right]. \label{eq:right_node_measurement}
\end{align}
We call $\bz_k^w$ the \thth{w} section of the \thth{k} right-node measurement vector. Figure \ref{fig:z_k} provides an illustration of a right-node measurement vector $\bz_k$. 
Assume the support vector is $\bx = (0,1,1,\ldots,0,1,0)$. That is, there are exactly $\K = 3$ defective items: item $2$, item $3$ and item $n-1$. 
Also, assume that the \thth{k} right node is connected to item $2$, item $n-2$ and item $n-1$, i.e., $\bt_k = (0,1,0,\ldots,1,1,0)$. 
Consider the \thth{k} right node and its corresponding right-node measurement vector $\bz_k$. 
By \eqref{eq:right_node_measurement}, $\bz_k$ is equal to $\bu_2 + \bu_{n-1}$, and it consists of $6$ sections. We can also find each section of $\bz_k$ by looking at the corresponding sections of $\bu_2$ and $\bu_{n-1}$, as described in Figure \ref{fig:z_k}. For instance, the first section of the \thth{k} right-node measurement vector $\bz_k^1$ is equal to $\bb_2+ \bb_{n-1}$.

\begin{figure}[t]
    \centering
    \includegraphics[width=0.35\textwidth]{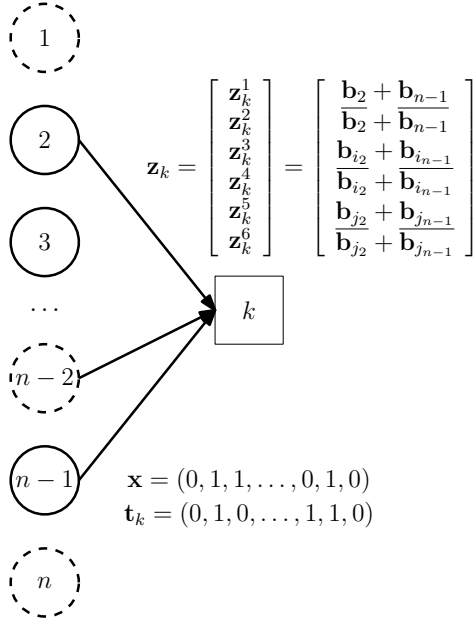}
    \caption{\textbf{An illustration of a right-node measurement vector $\bz_k$ and its $6$ sections}. When the support vector $\bx = (0,1,1,\ldots,0,1,0)$ and the \thth{k} row of the bipartite graph $\bt_k = (0,1,0,\ldots,1,1,0)$ are given, the \thth{k} right-node measurement vector $\bz_k$ is equal to $\bu_2 + \bu_{n-1}$, as depicted in the figure. The right-node measurement vector consists of sections $\{\bz^i_k\}_{i=1}^{6}$.}
    \label{fig:z_k}
\end{figure}

Assume that right node $k$ is connected to exactly one identified defective item, say $\ell_0$. 
The decoder first \emph{guesses} that the right node is a resolvable doubleton. That is, the right node is connected to exactly two defective items: one of them is the identified defective item $\ell_0$, and the other is the unidentified defective item $\ell_1$. Thus,
\begin{align}
\left[\begin{array}{c}
\bz^1_k \\
\bz^2_k 
\end{array}\right]
= \bu_{\ell_0} \vee \bu_{\ell_1}
=  
\left[ \begin{array}{c}
\bb_{\ell_0} \\ \rule{0pt}{3ex}  
\ol{\bb_{\ell_0}} \end{array} \right] 
\vee 
\left[ \begin{array}{c}
\bb_{\ell_1} \\ \rule{0pt}{3ex}  
\ol{\bb_{\ell_1}}
\end{array} \right]
=  
\left[ \begin{array}{c}
\bb_{\ell_0} \vee \bb_{\ell_1}  \\ \rule{0pt}{3ex}  
\ol{\bb_{\ell_0}} \vee \ol{\bb_{\ell_1}} \end{array} \right].
\end{align}
One can always recover any bit of $\bb_{\ell_1}$ as follows. Consider the first bit of $\bb_{\ell_1}$. If $\bb_{\ell_0,1} = 0$, $\bb_{\ell_1,1} = \bz^1_{k,1}$. If not, one can read the first bit from the second section and invert it, i.e., $\bb_{\ell_1,1} = \ol{\bz^2_{k,1}}$. 
Therefore, if the assumption is true, the decoder is able to recover the other defective item's index, $\ell_1$.
Similarly, the decoder applies the decoding algorithm to the other four segments from $\bz_k^3$ to $\bz_k^6$, and obtain two other indices, $\ell_2$ and $\ell_3$. 
Finally, the decoder \emph{checks} whether $\ell_2 = i_{\ell_1}$ and whether $\ell_3 = j_{\ell_1}$. If any of these conditions does not hold, the decoder concludes that the guess is wrong. If both conditions hold, the decoder concludes that the hypothesis is correct, declares the right node to be resolved, and declares a new defective item of index $\ell_1$. 
Then,
\begin{lemma} \label{lemma:noiseless}
SAFFRON successfully detects and resolves all the singletons and all the resolvable doubletons. When a right-node measurement vector is connected to more than $2$ defective items, SAFFRON declares a wrong defective item with probability no greater than $\frac{1}{\n^2}$. 
\end{lemma}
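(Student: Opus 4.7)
The plan is to dispatch the lemma's three assertions in sequence: correctness of singleton resolution, correctness of resolvable-doubleton resolution, and the $1/n^2$ false-declaration bound when the right node is connected to more than two defective items. The first two are purely deterministic consequences of the structure of $U$; the third is a short independence argument over the random vectors $\mathbf{s}_1$ and $\mathbf{s}_2$.

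For the singleton case, I would argue by a weight calculation on $\bz_k$. Within each of the three signature pairs $(U_s, \overline{U}_s)$, the bit-wise OR over a set of defective-neighbor signatures contributes $L$ plus the number of positions in which those signatures disagree: at a disagreeing position both the top and the complemented halves receive a $1$, whereas an agreeing position contributes to exactly one of the two halves. Because the first-pair signatures $\bb_j$ are distinct across $j \in [n]$, any two distinct defective neighbors disagree in at least one first-pair position, so $w(\bz_k) = 3L$ precisely when there is a single defective neighbor. Singletons are thus detected by a weight test, and $\bz_k^1 = \bb_{\ell_1}$ then reveals the defective index.

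For resolvable doubletons, the section-by-section procedure from Section \ref{sec:doubleton_decode} is easily verified: given $\bz_k = \bu_{\ell_0} \vee \bu_{\ell_1}$, the reconstruction recovers $\bb_{\ell_1}$, $\bb_{i_{\ell_1}}$, and $\bb_{j_{\ell_1}}$ exactly, so the consistency checks $\ell_2 = i_{\ell_1}$ and $\ell_3 = j_{\ell_1}$ trivially pass and the correct index is declared.

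The probabilistic bound is the main content. The key structural observation is that the decoded index $\ell_1$ is a deterministic function of $\bz_k^1$, $\bz_k^2$, and $\bb_{\ell_0}$, each of which depends only on the (fixed) first-pair signatures, the neighbor set of right node $k$, the defective set $\supp{\bx}$, and the identified neighbor $\ell_0$; in particular $\ell_1$ does not depend on $\mathbf{s}_1$ or $\mathbf{s}_2$. A wrong declaration requires $\ell_1 \notin \supp{\bx}$, so in particular $\ell_1$ is not among the defective neighbors of $k$. Because $\mathbf{s}_1$ has iid uniform entries on $[n]$, the coordinate $i_{\ell_1}$ is therefore independent of $\{i_j : j \text{ is a defective neighbor of } k\}$, and hence independent of $\ell_2$ (which is a function of those coordinates alone), giving $\Pr[i_{\ell_1} = \ell_2] = 1/n$. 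The symmetric argument gives $\Pr[j_{\ell_1} = \ell_3] = 1/n$, and independence of $\mathbf{s}_1$ and $\mathbf{s}_2$ multiplies these to yield the $1/n^2$ bound. The main subtlety, which I would flag explicitly, is the interpretation of ``wrong'': one must restrict attention to $\ell_1 \notin \supp{\bx}$, since otherwise $\ell_1$ could be a defective neighbor whose coordinate $i_{\ell_1}$ participates in the construction of $\ell_2$, breaking the independence that the $1/n$ per-check estimate rests on.
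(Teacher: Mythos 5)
Your proof is correct and follows essentially the same route as the paper's: a deterministic verification that singletons and resolvable doubletons are decoded exactly, followed by the observation that when $\ell_1$ is not among the defective neighbors of the right node, the coordinates $i_{\ell_1}$ and $j_{\ell_1}$ are uniform on $[\n]$ and independent of $\ell_2$ and $\ell_3$, so both checks pass with probability $\frac{1}{\n^2}$. Your write-up is merely more explicit than the paper's (the weight-$3L$ computation for singleton detection and the precise independence structure over $\mathbf{s}_1$ and $\mathbf{s}_2$ are only asserted there), but the underlying argument is identical.
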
 
\begin{proof}
From the above decoding algorithm description, it is clear that SAFFRON correctly detects and resolves all the singletons and resolvable doubletons. 
Consider a right node connected to more than $2$ defective items. From the first two sections of the corresponding right-node measurement vector, SAFFRON reads the first index $\ell_1$.  
SAFFRON declares a wrong defective item only if item $\ell_1$ is not connected to the right node and the two check equations hold. 
We note that when item $\ell_1$ is not connected to the right node, $i_{\ell_1}$ and $j_{\ell_1}$ are independent of $\ell_2$ and $\ell_3$. 
Thus, the probability that $\ell_2 = i_{\ell_1}$ and $\ell_3 = j_{\ell_1}$ is $\frac{1}{n^2}$. 
\end{proof}

\begin{remark} \label{remark:4sections}
Lemma \ref{lemma:noiseless} ensures that the signature matrix $U$ in \eqref{eq:A_1} can be used to detect and resolve all the singletons and all the resolvable doubletons with error probability no greater than $\frac{1}{\n^2}$.
Later, we show that SAFFRON performs the described right-node decoding algorithm \bigOh{\K} times in total.
Thus, by the union bound, the probability of having any error in \bigOh{\K} tests is bounded by \bigOh{\frac{\K}{\n^2}}, which always approaches zero as $\n$ increases. 
However, if the number of defective items is sublinear in $\n$, to reduce the number of tests, one can keep only the first $4$ sections in the signature matrix $U$. 
By doing so, the probability of having any error in \bigOh{\K} tests is upper bounded by \bigOh{\frac{\K}{\n}}, which still vanishes since $\K = o(\n)$. 
Therefore, one can always use a shorter signature matrix in order to save the number of tests by $33\%$, when the number of defective items is sublinear in $n$.
\end{remark}

\subsection{Example}
In this section, we provide an illustrative example of the decoding algorithm of SAFFRON. 

Consider a group testing problem with $\n = 8$ items and $\K = 3$ defective items. 
Let $\bx = (1, 0, 1, 0, 0, 0, 0, 1)$, i.e., item $1$, item $3$ and item $8$ are defective items. 
We show how SAFFRON can find the set of defective items.
Recall that we design our group testing matrix $A$ to be a row tensor product of $T_\mathcal{G}$ and $U$. 
Assume that a bipartite graph $\mathcal{G}$ is designed as follows. 
\footnote{
In the interest of conceptual clarity of the toy example, here we present a bipartite graph that is not left-regular. 
}
\begin{align*}
T_\mathcal{G} = \left [ \begin{array}{cccccccc}
0 & 1 & 1 & 1 & 0 & 1 & 0 & 0 \\
1 & 1 & 1 & 1 & 0 & 0 & 1 & 1 \\
1 & 0 & 0 & 0 & 1 & 0 & 1 & 1 \\
0 & 1 & 1 & 0 & 1 & 1 & 0 & 1 
\end{array} \right] \in \zoset^{\M \times \n}
\end{align*}
We have $\M = 5$ right nodes, and $\n = 8$ items are connected to them according to $T_\mathcal{G}$.

Assume that we drew random sequences $\mathbf{s}_1$ and $\mathbf{s}_2$, and the realization of them are as follows.
\begin{align}
\mathbf{s}_1 &= (5,2,4,8,7,1,3,6), ~~\mathbf{s}_2 = (3,1,5,6,3,8,2,7)
\end{align}
Thus, the measurement matrix of SAFFRON is as follows.
\begin{align}
\label{eq:A_1}
U &= 
\left[ \begin{array}{c}
U_1 \\
\overline{U}_1 \\
U_2 \\
\overline{U}_2 \\
U_3 \\
\overline{U}_3  
\end{array}
\right]
= 
\left[ \begin{array}{cccccccc}
0 & 0 & 0 & 0 & 1 & 1 & 1 & 1 \\
0 & 0 & 1 & 1 & 0 & 0 & 1 & 1 \\
0 & 1 & 0 & 1 & 0 & 1 & 0 & 1 \\ \hline
1 & 1 & 1 & 1 & 0 & 0 & 0 & 0 \\
1 & 1 & 0 & 0 & 1 & 1 & 0 & 0 \\
1 & 0 & 1 & 0 & 1 & 0 & 1 & 0 \\ \hline \hline
1 & 0 & 0 & 1 & 1 & 0 & 0 & 1 \\
0 & 0 & 1 & 1 & 1 & 0 & 1 & 0 \\
0 & 1 & 1 & 1 & 0 & 0 & 0 & 1 \\ \hline
0 & 1 & 1 & 0 & 0 & 1 & 1 & 0 \\
1 & 1 & 0 & 0 & 0 & 1 & 0 & 1 \\
1 & 0 & 0 & 0 & 1 & 1 & 1 & 0 \\ \hline \hline
0 & 0 & 1 & 1 & 0 & 1 & 0 & 1 \\
1 & 0 & 0 & 0 & 1 & 1 & 0 & 1 \\
0 & 0 & 0 & 1 & 0 & 1 & 1 & 0 \\ \hline
1 & 1 & 0 & 0 & 1 & 0 & 1 & 0 \\
0 & 1 & 1 & 1 & 0 & 0 & 1 & 0 \\
1 & 1 & 1 & 0 & 1 & 0 & 0 & 1
\end{array} 
\right]
= \left[ \bu_1, \bu_2, \bu_3, \bu_4, \bu_5, \bu_6, \bu_7, \bu_8 \right]
\end{align}

Using (\ref{eq:bin_measurement}), we have the following equations for the \thth{4} right-node measurement vectors.
\begin{align}
\by = \left(\begin{array}{c}
\bz_1 \\ 
\bz_2 \\ 
\bz_3 \\
\bz_4 
\end{array} \right) 
= \left(\begin{array}{c}
\bu_3 \\
\bu_1 \vee \bu_3 \vee \bu_8 \\
\bu_1 \vee \bu_8 \\ 
\bu_3 \vee \bu_8 
\end{array} \right)
\end{align}

Thus, we will observe the following right-node measurement vectors.
\begin{align}
\bz_1 = \left(0,1,0,1,0,1,0,1,1,1,0,0,1,0,0,0,1,1\right)^T\\
\bz_2 = \left(1,1,1,1,1,1,1,1,1,1,1,1,1,1,0,1,1,1\right)^T\\
\bz_3 = \left(1,1,1,1,1,1,1,0,1,0,1,1,1,1,0,1,0,1\right)^T\\
\bz_4 = \left(1,1,1,1,0,1,1,1,1,1,1,0,1,1,0,0,1,1\right)^T
\end{align}

We are now ready to decode these measurements. The decoding algorithm first finds all the singletons by checking whether a right-node measurement's weight is $3L = 3\log_2{\n}$. The weights are as follows.
\begin{align}
w(\bz_1) = 9,~~w(\bz_2) = 17,~~w(\bz_3) = 14,~~w(\bz_4) = 14
\end{align}
Since $w(\bz_1) = 3\log_2{\n}$, the decoder declares that right node $1$ is a singleton. Then, it can read off the first $3$ bits of $\bz_1$. As $\bz^1_1 = (0,1,0)$, the decoder concludes that item $3$ is defective.

In the second iteration, the algorithm inspects right nodes that are potentially resolvable doubletons including defective item $3$. Since $T_{2,1} = T_{4,1} = 1$, right nodes $2$ and $4$ are inspected. 

Consider right node $2$. 
We hypothesize that the right node is a doubleton consisting of defective item $3$ and exactly one other unknown defective item. That is, we guess that $\bz_2 = \bu_3 \vee \bu_{\ell_1}$, and recover $\ell_1,\ell_2$ and $\ell_3$ as described in the previous section. Then,
\begin{align}
\ell_1 = 6,~~\ell_2=8,~~\ell_1=3
\end{align}
By noticing that $i_{\ell_1} = i_6 = 1 \neq {\ell_2},~j_{\ell_1} = j_6 = 8 \neq {\ell_3}$, the decoder declares that the right node is \emph{not} a resolvable doubleton that contains item $3$. 

Consider right node $4$. The decoder again makes a guess that $\bz_4 = \bu_3 \vee \bu_{\ell_1}$. Then, it obtains three indices as follows.
\begin{align}
\ell_{1} = 8,~~\ell_{2}=6,~~\ell_{3}=7
\end{align}
By noticing that $i_{\ell_{1}} = i_8 = 6 = \ell_{2},~j_{\ell_{1}} = j_8 = 7 = \ell_{3}$, the decoder declares that right node $4$ is a resolvable doubleton including item $3$. Moreover, it also finds that the other defective item's index is $\ell_1 = 8$. 

In the third iteration, the decoder knows that right node $2$ is not resolvable anymore as it already includes two identified defective items. However, right node $3$ now has a possibility of being a resolvable doubleton as the decoder found defective item $8$ in the previous iteration, and defective item $8$ is also in right node $3$. 
The decoder hypothesizes that right node $3$ is a doubleton, i.e., $\bz_3 = \bu_8 \vee \bu_{\ell_1}$. Similarly, the decoder reads three indices, and the recovered three indices are as follows.
\begin{align}
\ell_1 = 1,~~\ell_2=5,~~\ell_3=3
\end{align}
Because $i_{\ell_1} = i_1 = 5 = \ell_2,~j_{\ell_1} = j_1 = 3 = \ell_3$, the decoder will conclude that right node $3$ is a doubleton including defective item $8$, and that the other defective item's index is $\ell_1 = 1$. 

The algorithm is terminated as there are no more right nodes to be resolved, concluding that items $1$, $3$ and $8$ are defective items.

\section{Main Results} \label{sec:noiseless}

In this section, we analyze the SAFFRON scheme. The main theoretical result of this paper is the following theorem.
\begin{theorem} 
With $m = 6C(\eps)\K \log_2{\n}$ tests, SAFFRON recovers at least $(1-\eps)\nod$ defective items with probability $1-$\bigOh{\frac{\K}{\n^2}}, where $\eps$ is an arbitrarily-close-to-zero constant, and $C(\eps)$ is a constant that depends only on $\eps$. Table \ref{table:c_eps} shows some pairs of $\eps$ and $C(\eps)$. 
\begin{table}[h]
\centering
\begin{tabular}{|l|l|l|l|l|l|l|l|l|}
\hline
Error floor, $\eps$    & $10^{-3}$       & $10^{-4}$       & $10^{-5}$       & $10^{-6}$        & $10^{-7}$        & $10^{-8}$        & $10^{-9}$        & $10^{-10}$       \\ \hline
$C(\eps) = \frac{d^{\star}}{\la^{\star}} $             &  \textbf{$6.13$} & \textbf{$7.88$} & \textbf{$9.63$} & \textbf{$11.36$} & \textbf{$13.10$} & \textbf{$14.84$} & \textbf{$16.57$} & \textbf{$18.30$} \\ \hline
Left-deg, $d^{\star}$                & $7$             & $9$             & $10$            & $12$             & $14$             & $15$             & $17$             & $19$             \\ \hline
\end{tabular}
\caption{Pairs of $\eps$ and $C(\eps)$}
\label{table:c_eps}
\end{table}

The computational complexity of the decoding algorithm is linear in the number of measurements, i.e., \bigOh{\K \log{\n}}, that is order-optimal.
\label{thm:singleton_doubleton}
\end{theorem}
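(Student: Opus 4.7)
My plan is to cast the SAFFRON decoder as a peeling process on a sparse random bipartite graph and analyze it via density evolution, which is the standard toolbox for sparse-graph codes. First, I would condition on the event that no false defective item is ever declared: by Lemma~\ref{lemma:noiseless}, each of the $\bigOh{\K}$ right-node resolution attempts has false-declaration probability at most $1/\n^2$, so a union bound shows this bad event has probability $\bigOh{\K/\n^2}$ and can be absorbed into the theorem's failure probability. On the complementary event, the decoder's behavior is purely structural: it recovers every defective item that is covered by some singleton right node, and then iteratively recovers any item sharing a doubleton right node with an already-recovered item. The question therefore reduces to a purely graph-theoretic one on the induced subgraph on the $\K$ defective left nodes.

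Setting $\M = C(\eps)\K$ and giving each left node degree $d$, the number of \emph{other} defective items incident to a given right node is $\mathrm{Binomial}(\K-1, d/\M)$, well-approximated by a Poisson with mean $\mu := d/C(\eps)$. Let $y_t$ denote the probability, in the $t$-th message-passing iteration, that a uniformly chosen edge from a defective item to a right node has not yet resolved that item. On the locally tree-like computation graph, such an edge fails iff the right node has $\geq 2$ other defective neighbors (then it is neither a singleton nor a doubleton for our item) or it is a doubleton whose other endpoint is itself unresolved through its remaining $d-1$ edges. This yields the density-evolution recursion
\begin{equation*}
y_{t+1} \;=\; \bigl(1 - e^{-\mu}(1+\mu)\bigr) + \mu e^{-\mu}\, y_t^{\,d-1}, \qquad y_0 = 1,
\end{equation*}
which is monotone on $[0,1]$ with $f(1)=1-e^{-\mu}<1$, so $y_t \downarrow y^{\star}$, the smallest fixed point. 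A defective item is left unrecovered iff all $d$ of its edges fail, giving a tree-ensemble unrecovered-fraction of $y^{\star d}$. For each integer $d$ I would then minimize $C = d/\mu$ subject to $y^{\star d} \leq \eps$ and optimize over $d$; a short numerical sweep produces the entries of Table~\ref{table:c_eps} (e.g., $d^{\star}=12,\ C(\eps)\approx 11.36$ at $\eps = 10^{-6}$).

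To upgrade this tree-ensemble statement to a high-probability guarantee on the finite random graph, I would combine two standard sparse-graph coding ingredients. First, a first-moment calculation on the left-regular configuration model shows that the neighborhood of a typical defective item up to depth $\bigOh{\log \K}$ is a tree with probability $1-o(1)$, so the computation-graph factorization underlying the recursion is valid except on a negligible event. Second, I would apply a Doob-martingale concentration in the Luby--Mitzenmacher--Shokrollahi style (equivalently, Wormald's differential-equation method) to show that the number of unrecovered defective items is $\K\, y^{\star d}(1+o(1))$ with probability $1 - e^{-\Omega(\K)}$. Together with the Lemma~\ref{lemma:noiseless} false-alarm bound, this delivers $(1-\eps)\K$ correct recoveries with probability $1-\bigOh{\K/\n^2}$.

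For the decoding-complexity half of the theorem, each of the $\M = \bigOh{\K}$ right nodes is inspected $\bigOh{1}$ times (once initially for the singleton test, and once more per peeled left neighbor, of which there are at most $d=\bigOh{1}$); every inspection manipulates the length-$6\log_2\n$ measurement vector in $\bigOh{\log\n}$ bit operations, giving total time $\bigOh{\K \log \n}$. The main technical obstacle is the concentration step: although the density-evolution recursion itself is short, rigorously controlling the one-step drift for a peeling decoder that mixes degree-$1$ decoding with degree-$2$ propagation requires careful state-space bookkeeping. This is a direct adaptation of the analyses used for LT/Raptor codes and irregular LDPC ensembles, but the precise verification of the Lipschitz conditions (or martingale bounded-difference constants) for our specific two-phase peeling is the step where care is most needed.
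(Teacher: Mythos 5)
Your proposal is correct and follows essentially the same route as the paper: a peeling/density-evolution analysis on the pruned graph with Poisson right degrees, numerical optimization of $(d,\la)$ to get $C(\eps)=d^{\star}/\la^{\star}$, the tree-neighborhood-plus-martingale-concentration transfer to the finite graph, and a union bound over the $\mathcal{O}(\K)$ resolution attempts for the $\mathcal{O}(\K/\n^2)$ failure term. Your recursion is just the dual (right-to-left message) form of the paper's $p_{j+1}=\left[1-\rho_1-\rho_2(1-p_j)\right]^{d-1}$ (substitute $y_t = 1-\rho_1-\rho_2(1-p_t)$), and your node-perspective error floor $y^{\star d}$ is a slightly sharper version of the paper's approximate $\eps=\left[1-e^{-\la}-\la e^{-\la}\right]^{d-1}$.
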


\begin{proof}
First note that each right node is associated with $6\log_2 \n$ tests based on \eqref{eq:bin_measurement}. Thus, we only need to show that the number of required right nodes to guarantee successful completion of the algorithm is $C(\eps)\K$.

We design a $d$-left-regular bipartite graph with $\n$ left nodes and $\M$ right nodes as follows. Each left node is connected to a set of $d$ right nodes uniformly at random, independently from other left nodes.
For the analysis, we focus on the \emph{pruned} bipartite graph constructed by the $\K$ defective left nodes and the right nodes. 
Then, the average right degree $\la=\frac{\K d}{\M}$. 
Further, as $\K$ gets large, the degree distribution of right nodes approaches a Poisson distribution with parameter $\la$. 
We define the right edge-degree distribution $\rho(x) = \sum_{i=1}^{\infty}{\rho_i x^{i-1}}$, where $\rho_i$ is the probability that a randomly selected edge in the graph is connected to a right node of degree $i$. Then, with this design of random bipartite graphs, 
\begin{align}
\rho_i = \frac{i\M}{\K d} \Pr(\text{degree of a random right node} = i) = \frac{i\M}{\K d} e^{-\la} \frac{\la^i}{i!} = e^{-\la} \frac{\la^{i-1}}{(i-1)!}.
\end{align}
Hence, $\rho(x) = e^{-\la(1-x)}$.

SAFFRON performs an iterative decoding procedure as follows. 
In the first round, it finds all the singletons and their corresponding defective items as described in Section \ref{sec:singleton_decode}. In the following rounds, it detects and resolves all the resolvable doubletons and recovers their corresponding defective items. This process is repeated until no new defective items are recovered during one iteration.

The fraction of defective items that cannot be identified at the end of this iterative decoding algorithm can be analyzed by density evolution \cite{RUbook, shokrollahi2004ldpc}. 
Density evolution is a tool to analyze a message-passing algorithm. 
At iteration $j$ of the algorithm, an unidentified defective item passes a message to its neighbor right nodes that it has not been recovered. 
Let $p_{j}$ be the probability that a random defective item is not identified at iteration $j$. 
The density evolution relates $p_j$ to $p_{j+1}$ as follows. 
\begin{align}
p_{j+1} = \Pr(\text{not resolvable from one children right node})^{d-1} = \left[ 1 - \left( \rho_1 + \rho_2 (1 - p_j) \right) \right] ^ {d-1},
\end{align}
where $\rho_1 = e^{-\la}$, and $\rho_2 = \la e^{-\la}$. To prove the above equation, consider the graph shown in Figure \ref{fig:tree}. 
\begin{figure}[h]
    \centering
    \includegraphics[width=0.4\textwidth]{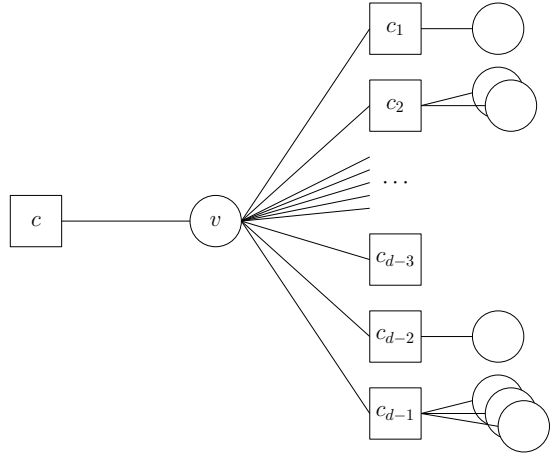}
    \caption{\textbf{A tree-like neighborhood of an edge between left node $v$ and right node $c$}. At iteration $j+1$, a `not-recovered' message is passed through this edge if and only if none of the other neighbors of $v$, $\{c_i\}_{i=1}^{d-1}$, have been identified as either a singleton or a resolvable doubleton at iteration $j$. 
}
    \label{fig:tree}
\end{figure} 
At iteration $j+1$, left node $v$ passes a `not-recovered' message to right node $c$ if none of its other neighbor right nodes $\{c_i\}_{i=1}^{d-1}$ has been identified as either a singleton or a resolvable doubleton at iteration $j$. 
The probability that a particular neighbor right node, say $c_1$, has been resolved (either as a singleton or doubleton) at iteration $j$ is $\rho_1 + \rho_2 (1 - p_j)$. Then, given a tree-like neighborhood of $v$, all the messages are independent. Thus, the above equation follows.
\begin{figure}[h]
    \centering    \includegraphics[width=0.5\textwidth]{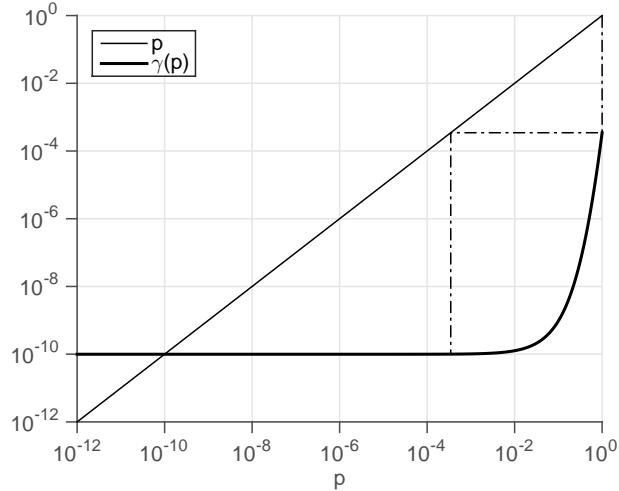}
    \caption{\textbf{Density evolution.} 
    Plotted is the function $\gamma(p) =\left[ 1 - \left( \rho_1 + \rho_2 (1 - p) \right) \right]^{d-1}$ with $\la = 1.038$ and $d = 19$ along with $y=p$. Note that $\gamma(p)$ meets the identity function at $p = 10^{-10}$. The dash-dotted line shows how $\{p_j\}$ evolves starting from $p_j = 1$ as $j$ increases.
    }
    \label{fig:density_evo}
\end{figure}

To characterize the fraction of defective items that will not be recovered by the time the algorithm terminates, we find the limit of the sequence $\{p_j\}$ as $j \rightarrow \infty$. 
Let $\eps = \lim_{j\rightarrow \infty}{ p_j}$. One can approximately calculate 
\begin{align}
\eps = \left[ 1 - ( \rho_1 + \rho_2) \right] ^ {d-1} = \left[ 1 - e^{-\la} - \la e^{-\la} \right] ^ {d-1}.
\label{eq:eps}
\end{align}

We can also pictorially see how $p_j$ evolves as follows. 
We first draw a vertical line from $(p_j,p_j)$ to $(p_j,\gamma(p))$ and then a horizontal line from $(p_j,\gamma(p_j))$ to $(\gamma(p_j),\gamma(p_j)) = (p_{j+1}, p_{j+1})$. We repeat the above procedure until $p_j$ converges to a fixed point. 
Figure \ref{fig:density_evo} plots the function $\gamma(p) \defeq \left[ 1 - \left( \rho_1 + \rho_2 (1 - p) \right) \right] ^ {d-1}$ when $\la = 1.038$ and $d = 19$. 
Starting from $p_1 = 1$, we observe that $p_j$ converges to $\eps =  \left[ 1 - e^{-\la} - \la e^{-\la} \right] ^ {d-1} \simeq 10^{-10}$.

We now find a pair of design parameters $(d, \M)$ that minimize the number of right nodes $\M$ (thus the number of tests) given a targeted reliability $\eps$.
\begin{align}
(d-1) \log (1 - e^{-\la} -\la e^{-\la}) =\log \eps.
\label{eq:n_of_bins_doubleton}
\end{align}

We solve the following optimization problem numerically.
\begin{align}
\underset{\la > 0}{\text{min}} &~~~~ M = \frac{\K d}{\la} \\
\text{subject to} &~~~~ (d-1) \log (1 - e^{-\la} -\la e^{-\la}) =\log \eps.
\end{align}

We numerically solve the optimization problem and attain the optimal $\la^{\star}$ and $d^{\star}$ as a function of $\eps$. Some of the optimal design parameters for different reliability levels are shown in Table \ref{table:c_eps}.
We define $C(\eps) = \frac{\M}{\K} = \frac{d^{\star}}{\la^{\star}}$.

Up to now, we have analyzed the average fraction of unidentified defective items over a randomly constructed bipartite graph. To complete the proof, we need to show three more steps. First, we need to show that after a fixed number of iterations, $p_j$ gets arbitrarily close to $\eps$. Second, we need to show that with high probability, a constant-depth neighborhood of a random left node is a tree. Finally, using the steps $1$ and $2$, we show that the actual fraction of unidentified defective items is highly concentrated around its average. These proofs are identical to the ones (Corollary 2.5., Lemma 2.6., Lemma 2.7.) in \cite{phasecode_arxiv}, which we omit for the purpose of readability.
\end{proof}

\subsection{A Variant of SAFFRON: Singleton-Only-SAFFRON}
We now present a variant of the SAFFRON scheme that only detects and resolves singletons. 
We call this scheme \emph{Singleton-Only-SAFFRON}. 
Clearly, the decoding algorithm of Singleton-Only-SAFFRON is not iterative; instead, it detects and resolves all singletons in a single-stage procedure.
We remark that Singleton-Only-SAFFRON is closely related to the non-adaptive GROTESQUE, proposed in \cite{jaggi_grotesque}: both schemes detect and resolve singletons only. 
However, Singleton-Only-SAFFRON requires significantly lower number of tests than the non-adaptive GROTESQUE.
This is because our deterministic signature matrix is more efficient than the random signature matrices.

We show that using only singletons costs us an extra factor $\mathcal{O}(\log \K)$ in the number of tests and computational complexity. 
However, the Singleton-Only-SAFFRON scheme can recover \emph{all} the $\K$ defective items with high probability. 

The measurement matrix of Singleton-Only-SAFFRON is similar to the one of SAFFRON with the difference that the signature matrix only consists of $U_1$ and $\overline{U}_1$ as stated in \eqref{eq:A_0}. The reason is that the algorithm does not intend to detect and resolve doubletons, as explained in Section \ref{sec:doubleton_decode}.

\begin{theorem}
With $m = 2e(1+\alpha)\nod \log{\nod} \log_2{\noi} \simeq 5.437(1+\alpha) \nod \log{\nod} \log_2{\noi}$ tests, Singleton-Only-SAFFRON finds \emph{all} the $\nod$ defective items with probability $1-$\bigOh{\frac{1}{K^\alpha}}, where $e$ is the base of the natural logarithm, and $\alpha > 0$. The computational complexity of the decoding algorithm is linear in the number of measurements, i.e., \bigOh{\nod \log{\nod} \log{\noi}}.
\label{thm:singleton_2}
\end{theorem}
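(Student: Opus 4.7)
The plan is to reduce the theorem to a coupon-collector style calculation on the bipartite pooling graph. With $\M = e(1+\alpha)\K \log \K$ right nodes, I would show that every defective item appears as a singleton in at least one right node with probability $1 - \mathcal{O}(1/\K^\alpha)$. Given this event, the singleton-detection subroutine from Section \ref{sec:singleton_decode}---which checks that a right-node measurement vector has Hamming weight exactly $L = \log_2 \n$ and then reads the defective index off the first section---deterministically recovers all $\K$ defective items.

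First, I would specify a Bernoulli pooling design in which each of the $\n$ items is placed independently into each of the $\M$ right nodes with probability $p = 1/\K$. Under this design, the probability that a fixed right node is a singleton for a fixed defective item $i$ (i.e., contains $i$ and no other defective item) equals $p(1-p)^{\K-1} = (1/\K)(1-1/\K)^{\K-1}$, which is at least $1/(e\K)$ for every $\K \geq 1$. Because the $\M$ right nodes are generated independently, the probability that item $i$ is never a singleton is at most $(1 - 1/(e\K))^{\M} \leq e^{-\M/(e\K)}$. A union bound over the $\K$ defective items, together with $\M = e(1+\alpha)\K \log \K$, gives a failure probability of at most $\K \cdot e^{-(1+\alpha)\log \K} = \K^{-\alpha}$, which matches the advertised bound.

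Next, I would bookkeep the number of tests and the decoding cost. As noted just before the statement, Singleton-Only-SAFFRON uses only the two-section signature matrix $[U_1;\overline{U}_1]$ with $2L = 2\log_2 \n$ rows, so each right node contributes exactly $2\log_2 \n$ individual tests and the total is $\m = 2\M\log_2 \n = 2e(1+\alpha)\K \log \K \log_2 \n$, matching the stated expression. Decoding iterates once over the $\M$ right-node measurement vectors, spending $\mathcal{O}(\log \n)$ time per node to compute the Hamming weight and, whenever it equals $L$, to read the first $\log_2 \n$ bits as the defective index; this gives total running time $\mathcal{O}(\M \log \n) = \mathcal{O}(\K \log \K \log \n)$.

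The main step requiring some care is verifying that the Bernoulli parameter $p = 1/\K$ is essentially optimal in the leading constant, so the prefactor is exactly $e$: this reduces to maximizing $p(1-p)^{\K-1}$ over $p\in(0,1)$, whose unique maximizer $p = 1/\K$ delivers the $1/(e\K)$ value used above, and any other scaling would inflate the constant in the union bound. Everything else is straightforward bookkeeping, and in contrast to the proof of Theorem \ref{thm:singleton_doubleton} no density-evolution argument is required, because the Singleton-Only decoder is one-shot rather than iterative and no tree-like-neighborhood concentration step is needed.
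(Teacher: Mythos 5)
Your proposal is correct and follows essentially the same route as the paper: a Bernoulli pooling design with per-node inclusion probability $p=1/\K$, a bound of roughly $e^{-\M/(e\K)}$ on the probability that a fixed defective item never appears as a singleton, and a union bound over the $\K$ defectives, with the test count and decoding cost read off from the two-section signature matrix. The only (cosmetic) difference is that you bound the per-node singleton probability directly by $p(1-p)^{\K-1}\geq 1/(e\K)$ and multiply over independent right nodes, whereas the paper conditions on the item's degree and sums a binomial to get $(1-pe^{-1})^{\M}$ -- the two computations are equivalent, and yours is if anything slightly cleaner.
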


\begin{proof}
First note that each right node is associated with $2\log_2 \n$ tests based on \eqref{eq:A_0}. Thus, we only need to show that the number of required right nodes to guarantee successful completion of the algorithm is $e(1+\alpha)\K\log\K$.

We design a bipartite graph with $M = e(1+\alpha)\nod \log{\nod}$ right nodes. Each left node is connected to a certain right node with probability $p = \frac{1}{\nod}$ independently of everything else. 
Then, the average degree of a right node is $\la = Kp = 1$. Note that the probability that a certain defective item is not connected to any singletons is as follows.
\begin{align}
\sum_{i=0}^{M}{{M \choose i}p^i (1-p)^{M-i} (1 - e^{-1})^i} = (p(1-e^{-1}) + (1-p))^M = (1-pe^{-1})^M
\end{align}
Hence, the probability that any of the $\K$ items is not found can be bounded using union bound as follows.
\begin{align}
P_e &\leq \nod (1-pe^{-1})^M  \\
&\leq \nod \left(e^{ -pe^{-1} }\right)^{(1+\alpha)e\nod \log \nod} \label{eq:ineq1} \\
&= \nod^{-\alpha }
\end{align}
In (\ref{eq:ineq1}), we use $1-x \leq e^{-x}$ for all $x$. 
\end{proof}

\section{Robustified SAFFRON for Noisy Group Testing} \label{sec:noisy}
In this section, we robustify SAFFRON such that it can recover the set of $\K$ defective items with erroneous or \emph{noisy} test results. 
We assume an i.i.d. noise model. That is, each test result is `wrong' with probability $q$, i.e.,
\begin{align}
\mathbf{y} =  A\odot\bx + \bw,
\end{align}
where the addition is over binary field, and $\bw$ is an i.i.d. noise vector whose components are $1$ with probability $0 < q < \frac{1}{2}$ and $0$ otherwise. \footnote{If $q > \frac{1}{2}$, one can always take the complement of all the test results, and treat the channel as if each test result is wrong with crossover probability $0<\tilde{q} = 1-q < \frac{1}{2}$.}

Our approach is simple: we design the robust signature matrix $U'$ consisting of \emph{encoded} columns of $U_j$ for $1 \leq j \leq 3$ and their complements.
Intuitively, we treat each column of $U_j$ as a message that needs to be transmitted over a noisy memoryless communication channel. An efficient modern error-correcting code guarantees reliable decoding of the signature. 
Spatially-coupled LDPC codes have the following properties \cite{spatial_LDPC}.
\begin{itemize}
\item It has an encoding function $f(\cdot): \zoset^{N} \rightarrow \zoset^{N/R}$ and a decoding function $g(\cdot): \zoset^{N/R} \rightarrow \zoset^{N}$, and its decoding complexity is \bigOh{N}.
\item If $R$ satisfies
\begin{align}
R < 1 - H(q) - \delta = q\log_2{q} + (1-q)\log_2{(1-q)} -\delta,
\end{align} 
for an arbitrarily small constant $\delta > 0$, then $\Pr(g(\bx + \bw) \neq \bx) < 2^{-\zeta N}$ as $N$ approaches infinity, for some constant $\zeta > 0$.
\end{itemize} 

Note that each column of $U_j$ is of length $\log_2 \n$. Thus, $N = \log_2 \n$ in this setup.
Given such an error-correcting code, we design the signature matrix $U$ for the robust SAFFRON scheme as follows.
\begin{align}
U &= 
\left[ \begin{array}{cccccc}
f(\bb_1) & f(\bb_2) & f(\bb_3) & \ldots & f(\bb_{n-2}) & f(\bb_{n-1}) \\ \rule{0pt}{3ex}   
\ol{f(\bb_1)} & \ol{f(\bb_2)} & \ol{f(\bb_3)} & \ldots & \ol{f(\bb_{n-2})} & \ol{f(\bb_{n-1})} \\ \rule{0pt}{3ex}   
f(\bb_{i_1}) & f(\bb_{i_2}) & f(\bb_{i_3}) & \ldots & f(\bb_{i_{\n-1}}) & f(\bb_{i_{\n}}) \\ \rule{0pt}{3ex}   
\ol{f(\bb_{i_1})} & \ol{f(\bb_{i_2})} & \ol{f(\bb_{i_3})} & \ldots & \ol{f(\bb_{i_{\n-1}})} & \ol{f(\bb_{i_{\n}})} \\ \rule{0pt}{3ex}   
f(\bb_{j_1}) & f(\bb_{j_2}) & f(\bb_{j_3}) & \ldots & f(\bb_{j_{\n-1}}) & f(\bb_{j_{\n}}) \\ \rule{0pt}{3ex}   
\ol{f(\bb_{j_1})} & \ol{f(\bb_{j_2})} & \ol{f(\bb_{j_3})} & \ldots & \ol{f(\bb_{j_{\n-1}})} & \ol{f(\bb_{j_{\n}})}
\end{array}
\right]
\in \{0,1\}^{\frac{6\log_2 n}{R}\times n}
\end{align}

We now describe how the robustified SAFFRON detects and resolves a singleton.
Consider a singleton right node $k$. Then, the right-node measurement vector is of the following form.
\begin{align}
\bz_k = \bu_k + \bw
\end{align}
The decoder first applies the decoding function $g(\cdot)$ to the first, third, and fifth segments of the right-node measurement vector and obtains $g(\bz^1_k), g(\bz^3_k), $ and $g(\bz^5_k)$. Let $\ell_1 - 1$ be the decimal representation of $g(\bz^{1}_k)$, $\ell_2 -1$ be the decimal representation of $g(\bz^{3}_k)$, and $\ell_3-1$ be the decimal representation of $g(\bz^{5}_k)$.
We use the following singleton detection rule. 
\begin{align} \label{eq:permute_check}
i_{\ell_1} = \ell_2 ,~~ j_{\ell_1} = \ell_3 
\end{align}
If the above conditions are satisfied, the decoder declares a singleton and the location index of the defective item $\ell_1$. 
\begin{lemma} \label{lem:noisy_singleton_singleton}
Robustified-SAFFRON misses a singleton with probability no greater than $\frac{3}{\n^\zeta}$. 
Robustified-SAFFRON wrongly declares a defective item with probability no greater than $\frac{1}{\n^{2+\zeta}}$.
\end{lemma}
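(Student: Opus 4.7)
The plan is to analyze the two failure events separately, combining the error-correction guarantee of the spatially-coupled LDPC decoder $g$ (failure probability at most $2^{-\zeta N} = n^{-\zeta}$ on a noisy codeword of length $N = \log_2 n$) with the randomness of the design sequences $\mathbf{s}_1, \mathbf{s}_2 \in [n]^n$, whose coordinates are iid uniform on $[n]$.

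For the miss-singleton bound, suppose right node $k$ is a singleton connected to the unique defective item $\ell$, so that $\bz_k = \bu_\ell + \bw$. Then the three sections $\bz_k^1, \bz_k^3, \bz_k^5$ used by the decoder are noisy copies of the codewords $f(\bb_\ell)$, $f(\bb_{i_\ell})$, $f(\bb_{j_\ell})$, respectively. If all three invocations of $g$ succeed we recover $\ell_1 = \ell$, $\ell_2 = i_\ell$, $\ell_3 = j_\ell$, whereupon the check in \eqref{eq:permute_check} passes automatically and the singleton is correctly declared. A miss therefore requires at least one of the three decoders to fail, and a union bound over those three events yields the desired bound $3/n^\zeta$.

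For the wrong-declaration bound, a false declaration of item $\ell_1$ at a right node requires both that (a) $\ell_1$ is not among the defective items connected to that node and (b) the consistency check $i_{\ell_1} = \ell_2$ and $j_{\ell_1} = \ell_3$ passes. Focusing on a singleton right node with true defective $\ell$, event (a) forces $\ell_1 \neq \ell$, which in turn forces the section-1 decoder to err, an event of probability at most $n^{-\zeta}$. Crucially, once $\ell_1 \neq \ell$, the coordinates $s_1[\ell_1]$ and $s_2[\ell_1]$ were not used anywhere in producing $\bz_k$, so conditional on whatever values $\ell_2, \ell_3$ the decoder outputs, $i_{\ell_1}$ and $j_{\ell_1}$ remain independent uniforms on $[n]$. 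The two check equations therefore pass with joint probability $1/n^2$, and multiplying by the $n^{-\zeta}$ factor for section-1 failure yields $1/n^{2+\zeta}$.

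The main technical obstacle is formalizing the independence claim: $s_1[\ell_1]$ is uniform on $[n]$ conditional on $\ell_2$ and the event $\ell_1 \neq \ell$ (and symmetrically for $s_2[\ell_1]$). This is true because $\ell_2 = g(\bz_k^3)$ is a deterministic function of the noise on section 3 and of the single coordinate $s_1[\ell]$, which is independent of $s_1[\ell_1]$ by the iid structure of $\mathbf{s}_1$; a clean proof requires careful bookkeeping of this dependence graph. Essentially the same argument handles right nodes connected to multiple defective items $\{\ell^{(1)}, \ldots, \ell^{(t)}\}$ by replacing the condition $\ell_1 \neq \ell$ with $\ell_1 \notin \{\ell^{(j)}\}$, recovering in the noisy regime the analog of the $1/n^2$ bound from Lemma \ref{lemma:noiseless}.
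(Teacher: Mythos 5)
Your proposal is correct and follows essentially the same route as the paper: a union bound over the three decoder invocations (each failing with probability $2^{-\zeta \log_2 n} = n^{-\zeta}$) for the missed-singleton bound, and the conjunction of a section-1 decoding error with the two check equations \eqref{eq:permute_check} passing by chance for the false-declaration bound. The paper's own proof is far terser and simply asserts the $\frac{1}{n^{2+\zeta}}$ product; your explicit justification of the independence of $i_{\ell_1}, j_{\ell_1}$ from the decoded $\ell_2, \ell_3$ (mirroring the argument in Lemma \ref{lemma:noiseless}) fills in the detail the paper leaves implicit.
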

\begin{proof}
Robustified-SAFFRON misses a singleton only if any of the $3$ decoded indices is wrong. The probability of such an event is upper bounded by $\frac{3}{\n^\zeta}$ by the union bound. 
Robustified-SAFFRON wrongly declares a defective item only if $\ell_1$ is wrongly decoded, but \eqref{eq:permute_check} still holds. Such an event happens with probability no greater than $\frac{1}{\n^{2+\zeta}}$.  \end{proof}
Lemma \ref{lem:noisy_singleton_singleton} implies that the robustified SAFFRON scheme will miss fewer than $\frac{3}{\n^\zeta}$-fraction of singletons. We compensate this loss by increasing the number of right nodes: instead of using $M$ right nodes, we use $M \left(1+\frac{3}{\n^\zeta} \right)$ right nodes, so that the effective number of right nodes becomes $M \left(1+\frac{3}{\n^\zeta} \right) \left(1-\frac{3}{\n^\zeta} \right) \simeq M$.

Now, consider right node $k $ that is a resolvable doubleton with an identified defective item $\ell_0$ and an unidentified defective item $\ell_1$.  
Then, the right-node measurement vector is of the following form.
\begin{align}
\left[\begin{array}{c}
\bz^1_k \\
\bz^2_k \\
\bz^3_k \\
\bz^4_k \\
\bz^5_k \\
\bz^6_k 
\end{array}\right]
= \bu_{\ell_0} \vee \bu_{\ell_1} + \bw_k
=  
\left[ \begin{array}{c}
f(\bb_{\ell_0}) \\ \rule{0pt}{3ex}  
\ol{f(\bb_{\ell_0})} \\
f(\bb_{i_{\ell_0}}) \\ \rule{0pt}{3ex}  
\ol{f(\bb_{i_{\ell_0}})} \\
f(\bb_{j_{\ell_0}}) \\ \rule{0pt}{3ex}  
\ol{f(\bb_{j_{\ell_0}})} 
\end{array} \right] 
\vee 
\left[ \begin{array}{c}
f(\bb_{\ell_1}) \\ \rule{0pt}{3ex}  
\ol{f(\bb_{\ell_1})} \\
f(\bb_{i_{\ell_1}}) \\ \rule{0pt}{3ex}  
\ol{f(\bb_{i_{\ell_1}})} \\
f(\bb_{j_{\ell_1}}) \\ \rule{0pt}{3ex}  
\ol{f(\bb_{j_{\ell_1}})} 
\end{array} \right]
+
\left[ \begin{array}{c}
\bw_k^1\\
\bw_k^2\\
\bw_k^3\\
\bw_k^4\\
\bw_k^5\\
\bw_k^6
\end{array} \right]
\end{align}

Consider the first two sections of the measurement vector $\bz^1_k$ and $\bz^2_k$. 
We show that since the decoder knows $\bu_{\ell_0}$, it can get access to the measurement $\bu_{\ell_1} + \bw_k$. 
To this end, the decoder first looks at the first bit of $\bu_{\ell_0}$. If this bit is $0$, then the first bit of $\bz^1_k$ is indeed the first bit of $f(\bb_{\ell_1}) + \bw^1_k$. Now, if the first bit of $\bu_{\ell_0}$ is $1$, the decoder complements the first bit of $\bz^2_k$ to get the first bit of $f(\bb_{\ell_1}) + \bw^2_k$. Similar procedure can be performed for all bits. By collecting all the bits, the decoder has access to $f(\bb_{\ell_1}) + \bw'^1_k$, where $\bw'^1_k$ has the same statistics as the original i.i.d. noise. Then, it can apply the decoding function to obtain $\ell_1$. 
Similarly, using the other sections of the measurement vector, the decoder can recover $\ell_2$ and $\ell_3$. Thus, by testing the detection rules in \eqref{eq:permute_check}, the decoder can detect a resolvable doubleton. 

We now present the following theorem for noisy group testing. 

\begin{theorem}
With $m = 6\beta(q)C(\eps)\K \log_2{\n}$ tests, Robustified-SAFFRON can recover at least $(1-\eps)\nod$ defective items with probability $1-$\bigOh{\frac{\K}{\n^{2+\zeta}}}, where $\eps$ is an arbitrarily-close-to-zero constant, $C(\eps)$ is a constant that depends only on $\eps$,
and $\beta(q) = \frac{1}{R} > \frac{1}{1-H(q)-\delta}$ for an arbitrarily small constant $\delta > 0$. Table \ref{table:c_eps} shows some pairs of $\eps$ and $C(\eps)$. The computational complexity of the decoding algorithm is linear in the number of tests, i.e., \bigOh{\K \log{\n}}.
\label{thm:singleton_doubleton_noisy}
\end{theorem}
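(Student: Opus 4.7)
The plan is to reduce the noisy case to the noiseless analysis of Theorem~\ref{thm:singleton_doubleton} by treating the spatially-coupled LDPC decoder as a black box that renders each section of the signature matrix an almost-noiseless channel. Each right node now consists of $6\log_2 \n / R = 6\beta(q)\log_2 \n$ physical tests, so it suffices to show that $C(\eps)\K$ right nodes, up to a vanishing inflation factor, still suffice for the peeling algorithm to leave at most an $\eps$-fraction of defective items unresolved.

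The main preparatory work has already been done: Lemma~\ref{lem:noisy_singleton_singleton} certifies that a singleton is missed with probability at most $3/\n^\zeta$ and that a false defective item is declared with probability at most $1/\n^{2+\zeta}$. For a resolvable doubleton, the paragraph immediately preceding the theorem shows that the decoder can strip the known signature $\bu_{\ell_0}$ bit-by-bit from the OR-combined measurement and isolate, in each of the three relevant section pairs, an expression of the form $f(\bb_{\ell_i}) + \bw'$ where $\bw'$ is statistically identical to the original i.i.d. Bernoulli($q$) noise. Applying $g(\cdot)$ then recovers each $\ell_i$ with failure probability at most $2^{-\zeta \log_2 \n} = 1/\n^\zeta$, so the same $3/\n^\zeta$ miss bound and $1/\n^{2+\zeta}$ false-declaration bound of Lemma~\ref{lem:noisy_singleton_singleton} hold verbatim for resolvable doubletons.

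Given these per-right-node guarantees, I would then mimic the density-evolution argument of Theorem~\ref{thm:singleton_doubleton} with two small adjustments. First, I would inflate the number of right nodes from $M = C(\eps)\K$ to $M(1 + 3/\n^\zeta)$, which compensates the expected fraction of singletons and resolvable doubletons missed because of LDPC decoding failures, leaving the effective number of resolvable right nodes at least $M$ in expectation. Second, a union bound over the $\mathcal{O}(\K)$ peeling operations bounds the total probability of any wrong defective declaration by $\mathcal{O}(\K/\n^{2+\zeta})$. The density evolution recursion $p_{j+1} = [1 - \rho_1 - \rho_2(1-p_j)]^{d-1}$, the optimization of $(d^\star,\la^\star)$ that produced Table~\ref{table:c_eps}, and the three concentration steps (finite-iteration convergence of $p_j \to \eps$, tree-likeness of constant-depth neighborhoods, and martingale concentration of the unresolved fraction) all import unchanged from the proof of Theorem~\ref{thm:singleton_doubleton} and from \cite{phasecode_arxiv}.

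The step that I expect to demand the most care is verifying that the effective noise $\bw'$ obtained after stripping $\bu_{\ell_0}$ from a doubleton measurement truly is i.i.d. Bernoulli($q$) and independent of the codeword being decoded: only then does the spatially-coupled LDPC guarantee $\Pr\bigl(g(\bx + \bw) \neq \bx\bigr) < 2^{-\zeta N}$ transfer to the doubleton-resolution step. The key observation is that Bernoulli($q$) noise is symmetric under the bitwise-complement operation used to invert the bits dictated by $\bu_{\ell_0}$, and the six sections correspond to disjoint physical tests so the effective noise remains independent across sections. Once this symmetry is checked, the rest of the argument is a routine repackaging of the noiseless proof, with $\beta(q)$ entering solely through the rate penalty on the length of each signature column.
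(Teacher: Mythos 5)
Your proposal matches the paper's argument: the paper proves this theorem exactly by combining Lemma~\ref{lem:noisy_singleton_singleton}, the inflation of the number of right nodes by the factor $\left(1+\frac{3}{\n^\zeta}\right)$, the bit-stripping reduction of a resolvable doubleton to a fresh instance of the channel decoding problem with noise statistically identical to the original i.i.d.\ noise, and then importing the density-evolution analysis and concentration steps of Theorem~\ref{thm:singleton_doubleton} wholesale, with $\beta(q)$ appearing only as the rate penalty on each signature column. The subtle point you single out (that the effective noise $\bw'$ after stripping is i.i.d.\ Bernoulli($q$) because each recovered bit is drawn from one of two disjoint physical tests, selected by the known, noise-independent signature $\bu_{\ell_0}$, and $\overline{x+w}=\overline{x}+w$ over the binary field) is precisely the observation the paper relies on.
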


\subsection{The Robustified Singleton-Only-SAFFRON}
The Singleton-Only-SAFFRON scheme can also be robustified in a similar manner. We design the signature matrix $U$ for the robustified SAFFRON scheme as follows.
\begin{align}
U &= 
\left[ \begin{array}{cccccc}
f(\bb_1) & f(\bb_2) & f(\bb_3) & \ldots & f(\bb_{n-2}) & f(\bb_{n-1}) \\ 
\rule{0pt}{3ex}   
f(\bb_{i_1}) & f(\bb_{i_2}) & f(\bb_{i_3}) & \ldots & f(\bb_{i_{\n-1}}) & f(\bb_{i_{\n}}) \\ 
\rule{0pt}{3ex}   
f(\bb_{j_1}) & f(\bb_{j_2}) & f(\bb_{j_3}) & \ldots & f(\bb_{j_{\n-1}}) & f(\bb_{j_{\n}})
\end{array}
\right]
\in \{0,1\}^{\frac{3\log_2 n}{R}\times n}
\end{align}
As explained before, by decoding the three sections of the measurement vector and checking whether three decoded indices satisfy \eqref{eq:permute_check}, the decoder can detect and resolve singletons. 

\begin{theorem}
With $m = 3e(1+\alpha)\beta(q)\nod \log{\nod} \log_2{\noi} \simeq 8.1548(1+\alpha)\beta(q) \nod \log{\nod} \log_2{\noi}$ tests, the robustified Singleton-Only-SAFFRON can find \emph{all} $\nod$ defective items with probability $1-$\bigOh{\frac{1}{K^\alpha}}, where $\beta(q) = \frac{1}{R} > \frac{1}{1-H(q)-\delta}$ for an arbitrarily small constant $\delta > 0$ and some constant $\alpha >0$. The computational complexity of the decoding algorithm is linear in the number of measurements, i.e., \bigOh{\nod \log{\nod} \log{\noi}}.
\label{thm:singleton_2_noisy}
\end{theorem}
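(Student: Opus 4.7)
The plan is to synthesize the random-graph counting argument from the proof of Theorem~\ref{thm:singleton_2} with the noisy singleton-detection analysis underlying Lemma~\ref{lem:noisy_singleton_singleton}, while accounting for the fact that the robustified Singleton-Only signature matrix has three coded sections (and no complements) in place of the two complementary sections $U_1, \overline{U}_1$ used in the noiseless scheme. The three sections are needed because noise destroys the reliability of the Hamming-weight-$L$ singleton test, forcing us to detect singletons by the permutation consistency check $i_{\ell_1}=\ell_2$, $j_{\ell_1}=\ell_3$ of \eqref{eq:permute_check}.

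First I would pin down the per-right-node test cost. Each of the three coded sections $f(\bb_\ell), f(\bb_{i_\ell}), f(\bb_{j_\ell})$ has length $\log_2 n / R$, so every right node contributes $3\log_2 n / R = 3\beta(q)\log_2 n$ tests. Once I show that $M = e(1+\alpha)\K\log\K$ right nodes suffice, the total test count will become the claimed $3e(1+\alpha)\beta(q)\K\log\K\log_2 n$.

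Next I would reuse the coupon-collector style bound from the proof of Theorem~\ref{thm:singleton_2}. With edge probability $p = 1/\K$ and $M = e(1+\alpha)\K\log\K$, the probability that a fixed defective item has no singleton neighbor is at most $(1 - p e^{-1})^M \leq e^{-p e^{-1} M} = \K^{-(1+\alpha)}$, and a union bound over the $\K$ defective items yields an overall singleton-coverage failure of $\mathcal{O}(\K^{-\alpha})$.

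Then I would handle the noisy detection stage. For a singleton right node whose unique defective item is $\ell_1$, the decoder applies $g(\cdot)$ to each of the three coded sections to obtain candidate indices $\ell_1, \ell_2, \ell_3$ and declares a defective item only when both $i_{\ell_1} = \ell_2$ and $j_{\ell_1} = \ell_3$. Mirroring Lemma~\ref{lem:noisy_singleton_singleton}, each individual section decodes incorrectly with probability at most $2^{-\zeta \log_2 n} = n^{-\zeta}$, so a genuine singleton is missed with probability at most $3n^{-\zeta}$, and a non-singleton right node triggers a false declaration with probability at most $n^{-(2+\zeta)}$, since the permutation check pins down two independent random coordinates $i_{\ell_1}, j_{\ell_1}$. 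As in Section~\ref{sec:noisy}, the $3 n^{-\zeta}$ miss rate is absorbed by inflating $M$ by the vanishing factor $(1 + 3 n^{-\zeta})$, which preserves the order of the test count, and the $\mathcal{O}(\K\log\K)$ false-declaration events are handled by a final union bound. The aggregate failure probability is then $\mathcal{O}(\K^{-\alpha}) + \mathcal{O}(\K\log\K \cdot n^{-\zeta}) + \mathcal{O}(\K\log\K \cdot n^{-(2+\zeta)})$, which is dominated by the first term in the regime of interest $\K = o(n)$. The decoding complexity follows immediately: there are $\mathcal{O}(\K\log\K)$ right nodes, each requiring $\mathcal{O}(\log n / R) = \mathcal{O}(\log n)$ work for the three LDPC decodings and $\mathcal{O}(1)$ work for the permutation check, giving the claimed $\mathcal{O}(\K\log\K\log n)$ total.

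The main obstacle I anticipate is not any one step in isolation — each mirrors an earlier lemma — but the careful bookkeeping needed to keep the graph-induced failure ($\mathcal{O}(\K^{-\alpha})$) cleanly separated from the noise-induced failures ($\mathcal{O}(\K\log\K \cdot n^{-\zeta})$), and to verify that the three-section permutation check is \emph{exactly} strong enough to rule out false positives from non-singletons without inflating the per-right-node test budget beyond $3\beta(q)\log_2 n$. Once that bookkeeping is done and the dominance of the $\K^{-\alpha}$ term is noted, the theorem follows.
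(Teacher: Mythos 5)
Your proposal is correct and follows exactly the argument the paper intends: the paper in fact states Theorem~\ref{thm:singleton_2_noisy} without an explicit proof, relying on the reader to combine the coupon-collector bound $\K(1-pe^{-1})^M \leq \K^{-\alpha}$ from the proof of Theorem~\ref{thm:singleton_2} with the per-right-node cost $3\beta(q)\log_2\n$ of the three coded sections and the noisy singleton detection via \eqref{eq:permute_check} analyzed in Lemma~\ref{lem:noisy_singleton_singleton}, which is precisely your synthesis. Your additional bookkeeping (absorbing the $3\n^{-\zeta}$ miss rate by inflating $M$, union-bounding the false declarations, and noting the dominance of the $\K^{-\alpha}$ term) is consistent with how the paper handles the same issues for the robustified doubleton scheme in Section~\ref{sec:noisy}.
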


\section{Simulation results} \label{sec:sim}
In this section, we evaluate the performance of the SAFFRON scheme and the robust SAFFRON scheme via extensive simulations. 
We implement simulators for both schemes in Python and test them on a laptop \footnote{We used a laptop with 2 GHz Intel Core i7 and 8 GB memory.}.

\subsection{SAFFRON}
The SAFFRON scheme recovers with high probability an arbitrarily-close-to-one fraction of $K$ defective items with \bigOh{\K \log \n} tests, as stated in Theorem \ref{thm:singleton_doubleton}.
The theorem also characterizes the optimal pairs of $(d^{\star}, \la^{\star})$ for a target recovery performance $\eps$: as the fraction of unidentified defective items $\eps$ decreases, the corresponding optimal left-degree $d^{\star}$ increases.
For different pairs of $(d, \M)$, we run SAFFRON $1000$ times and measure the average fraction of unidentified defective items: we choose $\n=2^{16}$, $\K = 100$, $d \in \{3,5,7,9\}$ and $\K \leq \M \leq 7\K$.
Plotted in Figure \ref{fig:num_of_bins} are the average fractions of unidentified defective items obtained via simulations for different values of $d$.
As expected, we can observe that if $\M$ is close to $\K$, the average fraction of unidentified defective items can be minimized by setting $d=3$, and if $\M$ is close to $7\K$, higher values of $d$ perform better. 
\begin{figure}
    \centering
   \includegraphics[width=0.5\textwidth]{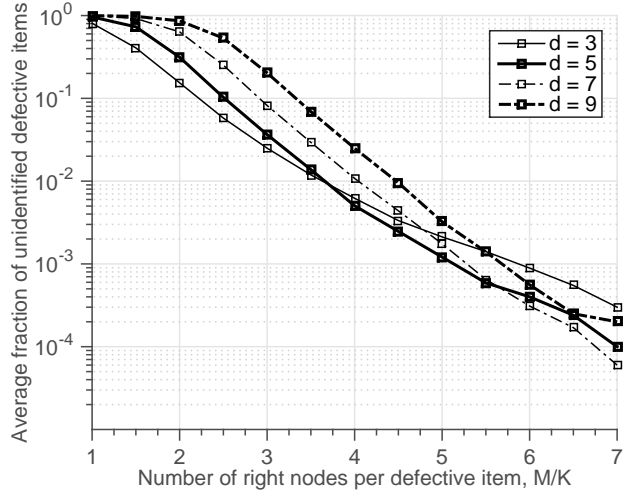}
    \caption{\textbf{The average fraction of unidentified defective items obtained via simulations.} For different pairs of $(d, \M)$, we simulated SAFFRON $1000$ times and measured the average fraction of unidentified defective items. We set $\n=2^{16}$, $\K = 100$, $d \in \{3,5,7,9\}$ and $\K \leq \M \leq 7\K$.}
    \label{fig:num_of_bins}
\end{figure}

We now simulate how computationally-efficient SAFFRON's decoding algorithm is.
We measure the average runtime of SAFFRON with $\n=2^{32}$, while increasing the value of $\K$. 
In Figure \ref{fig:time_complexity_a}, we plot the simulation results, and they clearly demonstrate the \bigOh{\K} factor of the computational complexity.
Similarly, we repeat simulations with $\K=2^{5}=32$, while increasing the value of $\n$. 
In Figure \ref{fig:time_complexity_b}, we plot the average runtime of SAFFRON with a logarithmic x-axis: we can clearly observe the \bigOh{\log\n} factor of the computational complexity.
\begin{figure}[h]
\centering
\begin{subfigure}{0.45\textwidth}
\includegraphics[width=\textwidth]{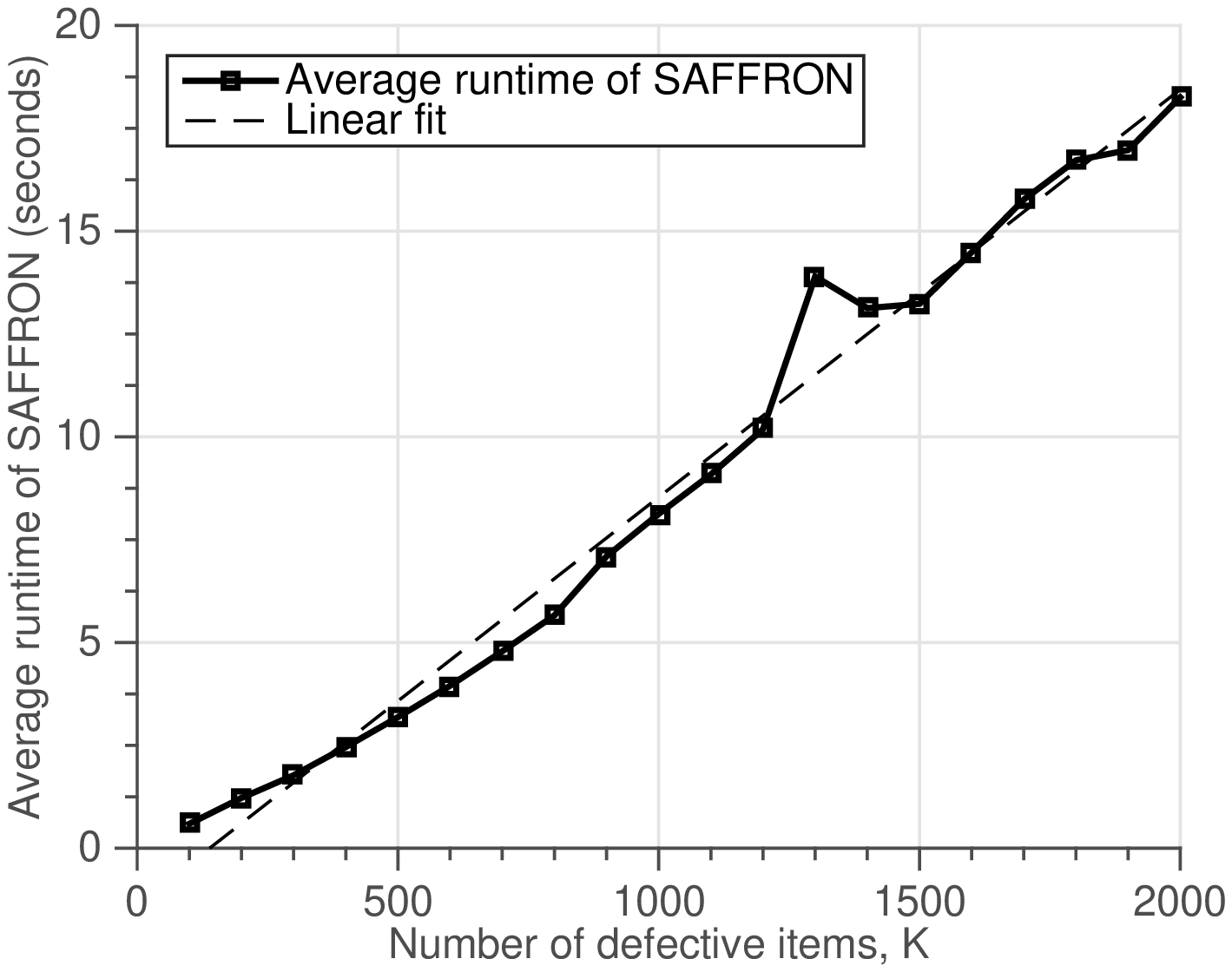}
\caption{Runtime with $\n = 2^{32}$ and varying $\K$.}
\label{fig:time_complexity_a}
\end{subfigure}
\begin{subfigure}{0.45\textwidth}
\includegraphics[width=\textwidth]{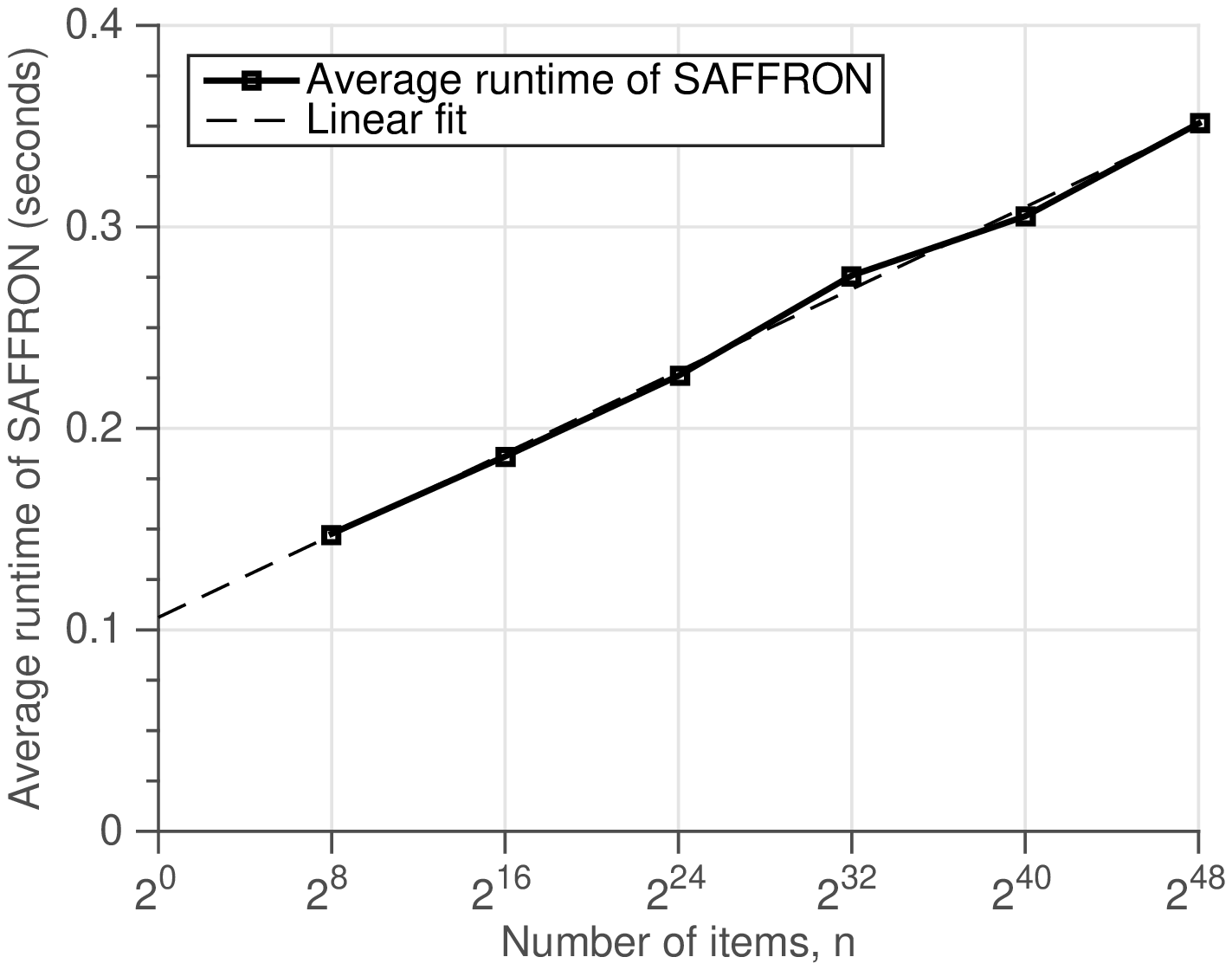}
\caption{Run-time with $\K = 2^{5}$ and varying $\n$.}
\label{fig:time_complexity_b}
\end{subfigure}
\caption{\textbf{Time complexity of SAFFRON.} We measure run-time of SAFFRON for varying values for $\n$ and $\K$.}
\label{fig:time_complexity}
\end{figure}

\subsection{Robustified-SAFFRON}
We now evaluate the robustified SAFFRON scheme. 
In our setting, we choose $\n = 2^{32} \simeq 4.3 \times 10^{9}$ and $\K = 2^{7} = 128$. For random bipartite graphs, we use $d = 12$ and $\M = 11.36\K$.
The noise model is the one we described in Section \ref{sec:noisy}. 
We vary the probability of error $q$ from $0.03$ to $0.05$: a test result is flipped with probability from $3\%$ to $5\%$. 

While we made use of capacity-achieving codes in Theorem \ref{thm:singleton_doubleton_noisy}, we use Reed-Solomon codes for simulations for simplicity \cite{costello2004error}.
A Reed-Solomon code takes a message of $c_k$ symbols from a finite field of size $c_q \geq c_n$, for a prime power $c_q$, and then encodes the message into $c_n$ symbols. 
This code can correct upto any $\lfloor \frac{c_n - c_k}{2} \rfloor$ symbol errors. 
By using a field of size $c_q = 2^8$, 
a binary representation of length $L~(= \log_2{\n})$ can be viewed as a $4$-symbol message, i.e., $c_k = 4$. 
Thus, the overall number of tests is as follows:
\begin{align}
m &= \underbrace{ 11.36\K}_\text{Number of right nodes} \times \underbrace{\frac{c_n}{c_k}}_\text{Error-correcting code expansion} \times \underbrace{6 \log_2 n}_\text{Number of message bits}. 
\end{align}
By having $c_n = c_k + 2t$, the robustified SAFFRON scheme can correct upto $t$ symbol errors within each section of the right-node measurement vector. Thus, we evaluate the performance of the robustified SAFFRON scheme with $c_n \in \{6,8,\ldots,16\}$ for various noise levels. 
We measure the average fraction of unidentified defective items over $1000$ runs for each setup. 
\begin{figure}[t]
    \centering    \includegraphics[width=0.6\textwidth]{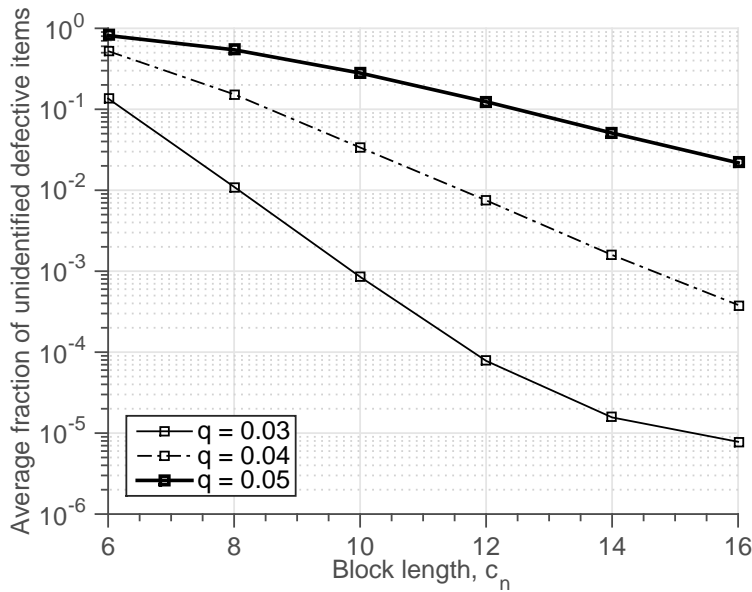}
    \caption{\textbf{Noisy simulation results.} We evaluate performance of our algorithm with noisy group testing results. For each pair of parameters, we measure the average fraction of missed defective items. 
    }
    \label{fig:noisy}
\end{figure}
Figure \ref{fig:noisy} shows the simulation results; the $x$-axis is the block-length of the used code, and the logarithmic $y$-axis is the average fraction of unidentified defective items. 
We can observe that for a higher noise level $q$, the minimum block length required to achieve a certain value of $\eps$ increases. 

We also observe that \emph{the robustified SAFFRON perfectly recovers all defective items even with the presence of erroneous test results} with certain parameters. 
For instance, with $q = 0.02$, we test the robustified SAFFRON $1000$ times with $c_n = 12$. 
For all the test cases, it successfully recovers all $\K=128$ defective items from the population of $\n \simeq 4.3\times 10^9$ items
with $\m = 838080 \simeq \frac{2}{10000}\n$ tests. 
Further, the decoding time takes only about $3.8$ seconds on average. 
Similarly, we observe the perfect recovery with $q = 0.01$ and $c_n \geq 12$ and with $q = 0.005$ and $c_n \geq 6$.

\section{Conclusion}
In this paper, we have proposed SAFFRON (\textbf{S}parse-gr\textbf{A}ph codes \textbf{F}ramework \textbf{F}or g\textbf{RO}up testi\textbf{N}g), which recovers an arbitrarily-close-to-one $(1-\eps)$-fraction of $\K$ defective items with high probability with $6C(\eps)K\log_2{\n}$ tests, where $C(\eps)$ is a relatively small constant that depends only on $\eps$. 
Also, the computational complexity of the decoding algorithm of SAFFRON is order-optimal. 
We have described the design and analysis of SAFFRON based on the powerful modern coding-theoretic tools of sparse-graph coding and density evolution. 
We have also proposed a variant of SAFFRON, Singleton-Only-SAFFRON, which recovers all defective items with $2e(1+\alpha)\K \log\K \log_2 \n$ tests, with probability $1-$\bigOh{\frac{1}{K^\alpha}}. 
Further, we robustify SAFFRON and Singleton-Only-SAFFRON by using modern error-correcting codes so that they can recover the set of defective items with noisy test results. 
To support our theoretical results, we have provided extensive simulation results that validate the theoretical efficacy and the practical potential of SAFFRON.

\bibliographystyle{IEEEtran}
\bibliography{ref}
\end{document}